\documentclass[reqno]{amsart}

\usepackage{amssymb,amsmath,amscd,amsthm}

\usepackage{mathrsfs}
\usepackage{latexsym}
\usepackage{graphicx}

\newtheoremstyle{myplain}{}{}{\it}
{0pt}{\scshape}{}{ }{\thmname{#1}\thmnumber{ #2}\thmnote{ (#3)}}
\newtheoremstyle{mydefinition}{}{}{}
{0pt}{\scshape}{}{ }{\thmname{#1}\thmnumber{ #2}\thmnote{ (#3)}}

\theoremstyle{myplain}

    \newtheorem{Def}{Definition}[section]
        \newtheorem{Lem}[Def]{Lemma}

        \newtheorem{theo}[Def]{Theorem}
        \newtheorem{prop}[Def]{Proposition}
        \newtheorem{rem}[Def]{Remark}
        \newtheorem{hyp}[Def]{Hypothesis}
        \newtheorem{cor}[Def]{Corollary}

\newcommand{\skp}[2]{\mbox{$\left\langle #1\, , \, #2\right\rangle$}}
\newcommand{\skpd}[2]{\mbox{$\left\langle #1\, ,\,#2\right\rangle_{\ell^2}$}}

\newcommand{\skpR}[2]{\mbox{$\left\langle #1\, ,\,#2\right\rangle_{L^2}$}}

\newcommand{\skpH}[2]{\mbox{$\left\langle #1\, ,\,#2\right\rangle_{{\mathscr H}_{j}}$}}

\newcommand{\natop}[2]{\genfrac{}{}{0pt}{}{#1}{#2}}

\DeclareMathOperator{\dist}{dist}
\DeclareMathOperator{\Span}{span}

\DeclareMathOperator{\supp}{supp}

\DeclareMathOperator{\spec}{spec}

\DeclareMathOperator{\Op}{Op}

\DeclareMathOperator{\diag}{diag}
\DeclareMathOperator{\SO}{SO}

\newcommand{\id}{\mathbf{1}}

\numberwithin{equation}{section}

\newcommand{\beqa}{\begin{eqnarray*}}
\newcommand{\eeqa}{\end{eqnarray*}}
\renewcommand{\hat}{\widehat}
\newcommand{\bauf}{\begin{itemize}}
\newcommand{\eauf}{\end{itemize}}
\newcommand{\ben}{\begin{enumerate}}
\newcommand{\een}{\end{enumerate}}
\newcommand{\ra}{\rightarrow}

\renewcommand{\O}{\Omega}
\newcommand{\ep}{\varepsilon}

\newcommand{\R}{{\mathbb R} }
\newcommand{\Z}{{\mathbb Z}}
\newcommand{\C}{{\mathbb C}}
\newcommand{\N}{{\mathbb N}}
\newcommand{\T}{{\mathbb T}}

\newcommand{\hN}{\tfrac{\mathbb N^*}{2}}
\newcommand{\hNnull}{\tfrac{\mathbb{N}}{2}}
\newcommand{\hZ}{\frac{\mathbb Z}{2}}
\newcommand{\disk}{(\varepsilon {\mathbb Z})^d}
\newcommand{\Hi}{\mathscr H}
\newcommand{\Ce}{\mathscr C}

\newcommand{\De}{\mathscr D}
\newcommand{\E}{{\mathcal E}}
\newcommand{\F}{{\mathcal F}}

\title[Agmon estimates for the difference of exact and approximate eigenfunctions]{Agmon estimates for the difference of exact and approximate Dirichlet eigenfunctions for difference operators}

\author{Markus Klein \and Elke Rosenberger}

\address{
Universit\"at Potsdam\\ Institut f\"ur
Mathematik \\ Am Neuen Palais 10\\ 14469 Potsdam}
\email{mklein@math.uni-potsdam.de, erosen@uni-potsdam.de}

\date{\today}

\keywords{Semi-classical Difference operator, tunneling,WKB-expansions, Dirichlet eigenfunctions, asymptotic expansion, 
multi-well potential, Agmon estimates}

\begin{document}

\begin{abstract}
We analyze a general class of difference operators $H_\ep = T_\ep
+ V_\ep$ on $\ell^2(\disk)$, where $V_\ep$ is a multi-well
potential and $\ep$ is a small parameter. We  construct approximate eigenfunctions in 
neighbourhoods of the different wells and give weighted $\ell^2$-estimates for the difference of these
and the exact eigenfunctions of the associated Dirichlet-operators.
\end{abstract}

\maketitle

\section{Introduction}

This paper is motivated by the aim to find complete asymptotic expansions for the tunneling effect between different wells of a difference 
operator with multi-well potential. More precisely, we investigate a rather general class of families of
difference operators $\left(H_\ep\right)_{\ep>0}$ on the Hilbert space $\ell^2(\disk)$, as the small parameter  $\ep>0$
tends to zero. The operator $H_\ep$ is given by
\begin{align} \label{Hepein}
H_\ep &= (T_\ep + V_\ep ),  \quad\text{where}\quad
T_\ep  = \sum_{\gamma\in\disk} a_\gamma \tau_\gamma ,\\
(\tau_\gamma u)(x) &= u(x+\gamma) \, ,\qquad \quad (a_\gamma u)(x) := a_\gamma(x; \ep) u(x) \quad \mbox{for} 
\quad x,\gamma\in\disk \label{agammataugamma}
\end{align}
and $V_\ep$ is a multiplication operator which in leading order is given by a multi-well potential
$V_0 \in \Ce^\infty (\R^d)$, i.e. $V_0$ has more than one non-degenerate minima.

By  the tunneling effect, the interaction between neighboring
potential wells leads to the fact that
the eigenvalues and eigenfunctions are different from those
of an operator with decoupled wells, which is realized by the direct sum of ``Dirichlet-operators''
situated at the several wells. Since the interaction is small, it can be treated as a perturbation of
the decoupled system.

In \cite{kleinro4}, we showed that it is possible to approximate the
eigenfunctions of the original Hamilton operator $H_\ep$ with respect to a fixed spectral interval
by the eigenfunctions of the
several Dirichlet operators situated at the different wells and we gave
a representation of $H_\ep$ with respect to a basis of Dirichlet-eigenfunctions. In \cite{kleinro3} we constructed
formal asymptotic expansions for the Dirichlet-eigenfunctions at the wells.

In this paper, we show that to these formal expansions we can associate $\Ce_0^\infty$-functions defined in a larger neighbourhood of the wells. 
Moreover, we 
prove weighted $\ell^2$-estimates for the difference of these WKB-functions and the Dirichlet eigenfunctions.
This will allow us to compute complete asymptotic expansions for the elements of the interaction matrix and
to explicitly obtain the leading order term of eigenvalue splitting in a forthcoming paper.\\

This paper is based on the thesis Rosenberger \cite{thesis}. It is the fifth in a series of papers (see Klein-Rosenberger \cite{kleinro}, \cite{kleinro2}, 
\cite{kleinro3}, \cite{kleinro4}); the aim is to develop an analytic approach
to the semiclassical eigenvalue problem and tunneling for $H_\ep$ which in detail and
precision is comparable  
to the well known analysis for the Schr\"odinger operator (see Simon \cite{Si1} and
Helffer-Sj\"ostrand \cite{hesjo}). Our motivation comes from
stochastic problems (see Bovier-Eckhoff-Gayrard-Klein \cite{begk1}, \cite{begk2}). A large class of
discrete Markov chains, analyzed in \cite{begk2} with probabilistic
techniques, falls into the framework of difference operators treated in this article.

We expect that results similar to this paper hold for WKB expansions associated with eigenfunctions
of generators of jump processes in $\R^d$,  see Klein-L\'eonard-Rosenberger \cite{KLR} for  first results in this direction.

\begin{hyp}\label{hyp1}
For some $\ep_0>0$,
\begin{enumerate}
\item  the coefficients $a_\gamma(x; \ep)$ in \eqref{agammataugamma} are
functions
\begin{equation}\label{agammafunk}
a: \disk \times \R^d \times (0,\ep_0] \ra \R\, , \qquad (\gamma, x,
\ep) \mapsto a_\gamma(x; \ep)\, ,
\end{equation}
satisfying the following conditions:
\ben
\item[(i)] They have an
expansion
\begin{equation}\label{agammaexp}
a_\gamma(x; \ep) = \sum_{k=0}^{N-1} \ep^k a_\gamma^{(k)}(x)  + R^{(N)}_\gamma (x; \ep)\, ,\qquad N\in\N^*\, ,
\end{equation}
where $a_\gamma \in\Ce^\infty(\R^d\times (0,\ep_0])$ and $a_\gamma^{(k)}\in\Ce^\infty(\R^d)$ for
all $\gamma\in\disk$ and $0\leq k \leq N-1$.
\item[(ii)] $\sum_{\gamma\in\disk} a_\gamma^{(0)}  = 0$ and $a_\gamma^{(0)}
\leq 0$ for $\gamma \neq 0$.
\item[(iii)] $a_\gamma(x; \ep) =
a_{-\gamma}(x+\gamma; \ep)$ for all $x \in \R^d, \gamma \in \disk, \ep\in (0,\ep_0]$.
\item[(iv)] For any $c>0$ and $\alpha\in\N^d$ there exists $C>0$ such that for $0\leq k\leq N-1$, uniformly
with respect to $x\in\R^d$ and $\ep\in (0,\ep_0]$,
\begin{equation}\label{abfallagamma}
\| \, e^{\frac{c|.|}{\ep}} \partial_x^\alpha a^{(k)}_.(x)\|_{\ell_\gamma^2(\disk)}\leq C \qquad\text{and} \qquad
\|\,
 e^{\frac{c|.|}{\ep}} \partial_x^\alpha R^{(N)}_.(x)\|_{\ell^2_\gamma(\disk)}
 \leq C\ep^N\; .
\end{equation}
\item[(v)]
$\Span \{\gamma\in\disk\,|\, a^{(0)}_\gamma(x) <0\}= \R^d$ for all $x\in\R^d$.
\een
\item
\ben
\item[(i)] for all $\ep\in (0,\ep_0]$, the potential energy $V_\ep$ is the restriction to $\disk$ of a
function $\hat{V}_\ep\in\Ce^\infty (\R^d, \R)$ which has an expansion
\begin{equation}\label{hatVell}
\hat{V}_{\ep}(x) = \sum_{\ell=0}^{N-1}\ep^\ell V_\ell(x) + R_{N}(x;\ep)  \, ,\qquad N\in\N^*\, ,
\end{equation}
where $V_\ell\in\Ce^\infty(\R^d)$, $R_{N}\in \Ce^\infty (\R^d\times (0,\ep_0])$ and
for any compact set $K\subset \R^d$
there exists a constant $C_K$ such that $\sup_{x\in K} |R_{N}(x;\ep)|\leq C_K \ep^{N}$.
\item[(ii)]
$V_\ep$ is polynomially bounded and there exist constants $R, C > 0$ such that
$V_\ep(x) > C$ for all $|x| \geq R$ and $\ep\in(0,\ep_0]$.
\item[(iii)]
$V_0(x)\geq 0$ and it takes the value $0$ only at a finite number of non-degenerate minima
$x_j,\; j\in \mathcal{C} =\{1,\ldots , r\}$,
which we call potential wells.
\een
\een
\end{hyp}

If $\T^d := \R^d/(2\pi)\Z^d$ denotes the $d$-dimensional torus and
$b\in \Ce^\infty\left(\R^d\times \T^d\times (0,\ep_0]\right)$ for some $\ep_0>0$,
a family of pseudo-differential operators $\Op_\ep^{\T}(b): {\mathcal K}\left(\disk\right) \longrightarrow
{\mathcal K}'\left(\disk\right)$ is defined by
\begin{equation}\label{psdo2}
\Op_\ep^{\T}(b)\, v(x) := (2\pi)^{-d} \sum_{y\in\disk}\int_{[-\pi,\pi]^d}
e^{\frac{i}{\ep}(y-x)\xi}
b(x,\xi;\ep)v(y) \, d\xi 
\end{equation}
where
\begin{equation}\label{kompaktge}
{\mathcal K}\left(\disk\right):=\{ u: \disk\rightarrow \C\; |\; u~\mbox{has compact support}\}
\end{equation}
and ${\mathcal K}'\left(\disk\right):= \{f: \disk\rightarrow \C\ \} $ is dual to
${\mathcal K}\left(\disk\right)$
by use of the scalar product $\skpd{u}{v}:= \sum_x \bar{u}(x)v(x)$ (for details and properties 
of such pseudo-differential operators see Klein-Rosenberger \cite{kleinro2}).

We remark that for $T_\ep$ defined in
\eqref{Hepein}, under the assumptions given in Hypothesis \ref{hyp1}, one has $T_\ep = \Op_\ep^{\T}(t)$  where
$t\in\Ce^\infty\left(\R^d\times\T^d\times (0,\ep_0]\right)$ is given by
\begin{equation}\label{talsexp}
t(x,\xi; \ep) = \sum_{\gamma\in\disk} a_\gamma (x; \ep) \exp
\left(-\tfrac{i}{\ep}\gamma\cdot\xi\right)\; .
\end{equation}
Here $t(x,\xi;\ep)$ is considered as a function on $\R^{2d}\times (0,\ep_0]$, which is
$2\pi$-periodic with respect to $\xi$. By condition (a)(iv) in Hypothesis \ref{hyp1}, 
the function $\xi\mapsto t(x, \xi; \ep)$ has
an analytic continuation to $\C^d$. 
Moreover for all $B>0$
\begin{equation}\label{agammasum} 
\sum_\gamma \left| a_\gamma(x; \ep)\right| e^{\frac{B|\gamma|}{\ep}} \leq C \qquad\text{and thus}\qquad
\sup_{x\in\R^d} |a_\gamma(x; \ep)| \leq C e^{-\frac{B|\gamma|}{\ep}}
\end{equation}
for some $C>0$ uniformly with respect to $x$ and $\ep$.
We further remark that (a)(iv) implies $\bigl|a_\gamma^{(k)}(x)- a_\gamma^{(k)}(x + h)\bigr|\leq C |h|$ for $0\leq k \leq N-1$
uniformly with respect to $\gamma\in\disk$ and $x,h\in\R^d$ and (a)(ii),(iii),(iv) imply that $T_\ep$ is symmetric and 
bounded and that for some $C_T>0$
\begin{equation}\label{Tvonunten}
 \skpd{u}{T_\ep u} \geq - C_T \ep \|u\|^2_{\ell^2}\;, \qquad u\in\ell^2(\disk)\; .
\end{equation}

Furthermore, we set
\begin{align}\label{texpand}
t(x,\xi;\ep) &= \sum_{k=0}^{N-1} \ep^k t_k (x,\xi) + \tilde{t}_N(x,\xi;\ep)\quad
\text{with}\\
t_k(x, \xi) &:= \sum_{\gamma\in\disk} a_\gamma^{(k)}(x) e^{-\frac{i}{\ep}\gamma\xi}\, ,
\qquad 0\leq k \leq N-1\,,\nonumber\\
\hat{t}_N(x, \xi; \ep) &:= \sum_{\gamma\in\disk} R_\gamma^{(N)}(x; \ep)
e^{-\frac{i}{\ep}\gamma\xi}\nonumber\; .
\end{align}
Thus, in leading order, the symbol of $H_\ep$ is $h_0:=t_0+V_0$.
Combining \eqref{agammaexp} and (a)(iii) shows that the $2\pi$-periodic function $\R^d\ni\xi\mapsto t_0(x,\xi)$ 
is even with respect to $\xi\mapsto -\xi$, i.e.,
$a_\gamma^{(0)}(x) = a_{-\gamma}^{(0)}(x)$ for all $x\in\R^d, \gamma\in\disk$ (see \cite{kleinro}, Lemma 1.2) and therefore
\begin{equation}\label{tcosh} 
t_0(x,\xi) = \sum_{\gamma\in\disk} a_\gamma^{(0)}(x) \cos \bigl(\tfrac{1}{\ep}\gamma\cdot\xi\bigr) \; . 
\end{equation}
At $\xi=0$, for fixed $x\in\R^d$ the function $t_0$ defined in \eqref{texpand} has by Hypothesis \ref{hyp1}(a)(ii) an expansion
\begin{equation}\label{kinen}
t_0(x,\xi) = \skp{\xi}{B(x)\xi} + \sum_{\natop{|\alpha|=2n}{n\geq2}} B_\alpha (x) \xi^\alpha \qquad\text{as}\;\; |\xi|\to 0
\end{equation}
where $\alpha\in\N^d$, $B\in\Ce^\infty (\R^d, \mathcal{M}(d\times d,\R))$, for any $x\in\R^d$ the matrix $B(x)$  is
positive definite and symmetric and $B_\alpha$ are real functions. By straightforward calculations one gets
for $1\leq \mu,\nu\leq d$
\begin{equation}\label{Bnumu}
B_{\nu\mu}(x) = -\frac{1}{2\ep^2} \sum_{\gamma\in\disk} a_\gamma^{(0)}(x) \gamma_\nu\gamma_\mu\; .
\end{equation}

In order to work in the context of \cite{kleinro2}, we shall assume

\begin{hyp}\label{tildevarphihyp}
At the minima $x_j$ of $V_0,\, j\in\mathcal{C}$, we assume that $t_0$ defined in \eqref{texpand} fulfills
\[ t_0(x_j, \xi) >0  \quad\text{if} \quad |\xi| >0 \, .\]
\end{hyp}

We set, using \eqref{tcosh}, 
\begin{align}\label{tildehnull}
\tilde{h}_0\,:\,  \R^{2d} \ra \R\, , \quad \tilde{h}_0(x, \xi) &:= - h_0(x, i\xi) =: \tilde{t}_0(x,\xi)  - V_0(x)  \qquad \text{where }\\
\tilde{t}_0(x,\xi) &= - \sum_{\eta\in\Z^d} a^{(0)}_{\ep\eta}(x) \cosh \left(\eta\cdot \xi\right) \;.\nonumber
\end{align}
For any set $D\subset \R^d$, we denote the restriction to the lattice by  $D_\ep := D\cap \disk$.\\

The first step in this paper is the construction of quasimodes for $H_\ep$. 
For a one well operator, such  quasimodes were constructed near the potential well in \cite{kleinro3}, using the formal asymptotic
expansions obtained in that paper. Here we generalize  to our multiwell case and extend the quasimodes to larger domains, using
the Hamiltonian flow associated to $\tilde{h}_0$. In an adapted form, we restate some of the results of \cite{kleinro3} in the Appendix.

To this end, we consider $H_\ep$ as an operator on $L^2(\R^d)$, i.e. we construct quasimodes with respect to the
operator on $L^2(\R^d)$
\begin{equation}\label{hatHep}
 \widehat{H}_\ep = \widehat{T}_\ep + \widehat{V}_\ep\qquad \text{with}\quad 
\widehat{T}_\ep = \sum_{\gamma\in\disk} a_\gamma \tau_\gamma
\end{equation}
where $a_\gamma\tau_\gamma$ is the translation operator defined in \eqref{agammataugamma}.

The operator on $L^2(\R^d)$ associated to $h_0$ is given by
\begin{equation}\label{Hnullhut}
 \hat{H}_0  := \sum_\gamma a_\gamma^{(0)} \tau_\gamma + V_0\; .
\end{equation}
The harmonic oscillator at the potential wells $x_j,\, j\in \mathcal{C}$, associated to $\hat{H}_0$ is given by the operator on $L^2(\R^d)$
\begin{equation}\label{Hnullhutq}
 \hat{H}_{0,q}^j(x,\ep D) = -\ep^2 \skp{D}{B_j D} + \skp{x}{A_j x} + \ep \bigl(t_1(x_j, 0) + V_1(x_j)\bigr)
\end{equation}
where $A_j=D^2 V_0|_{x_j}$ and $B_j=B(x_j)$ for $B$ given in \eqref{Bnumu}.
We introduce the unitary transformation
\begin{equation}\label{unitrans}
  U_jf(x) = |\det B_j|^{-\frac{1}{4}} f(C_j x) \, , \quad C_j:=  R_j B_j^{-\frac{1}{2}} 
\end{equation}
on $L^2(\R^d)$, where $R_j\in \SO (d,\R)$,  such that $\Lambda_j:= R_jB_j^{\frac{1}{2}}A_j B_j^{\frac{1}{2}}R_j^T$ is diagonal. 
Then the operator 
\begin{equation}\label{hatHnullstrich}
 \hat{H}'_{0,j} := U_j^{-1} \tau_{-x_j}\hat{H}_0 \tau_{x_j} U_j \qquad\text{and}\quad 
\end{equation}
is centered at $0$
and, for $z= x-x_j$, the associated symbol $h_{0,j}(z,\zeta) = h_0(C_j^{-1} z, C_j^T \zeta)$ with respect
to the $\ep$-quantization given in \eqref{psdo2}
is in leading order diagonal: 
\[ h_{0,j}(z,\zeta) = \zeta^2 + \skp{z}{\Lambda z} + O(|z|^3) + O(|\zeta|^3)\, , \qquad \Lambda = \diag (\lambda_1,\ldots \lambda_n) \]
(the transformation of $h_0$ follows  from the usual change of variable formulae for pseudo-differential operators in view of
e.g. (A.4) in \cite{kleinro2}). Moreover, for $\widehat{H}_\ep$ and $U_j$ given in \eqref{hatHep} and \eqref{unitrans}, we set
\begin{equation}\label{Hstrichj}
\widehat{H}'_{\ep,j}:= U_j^{-1}\tau_{-x_j} \widehat{H}_\ep \tau_{x_j} U_j\; .
\end{equation}

By Hypothesis \ref{hyp1}, $\tilde{h}_0$ is hyperregular, and even and  strictly convex in each fibre (cf. \cite{kleinro}). We can thus
introduce  the associated Finsler distance $d = d_\ell$ on $\R^d$ as in \cite{kleinro}, Definition 2.16, where we set 
$\widetilde{M}:=\R^d\setminus\{x_k\, , \, k\in\mathcal{C}\}$.
Analog to \cite{kleinro}, Theorem 1.6, it can be shown that $d$ is locally Lipschitz and that for any $j\in\mathcal{C}$, the distance 
$d^j(x):= d(x, x_j)$ 
fulfills the generalized eikonal equation and inequality respectively 
\begin{align}\label{eikonal2} 
\tilde{h}_0 \bigl(x, \nabla d^j (x)\bigr) &= 0 \; ,\qquad x\in\Omega^j \\
\tilde{h}_0\bigl(x, \nabla d^j(x)\bigr) &\leq 0\; , \qquad x\in\R^d
\end{align}
where $\Omega^j$ is some neighborhood of $x_j$.

It follows directly that for each $j\in\mathcal{C}$
\begin{equation} \label{djhut}
 \hat{d}^j(z) := d^j \bigl(x_j + C_j^{-1}z\bigr)
\end{equation}
is the Finsler distance associated to $\tilde{h}_{0,j}$ defined in \eqref{eikonal}, 
satisfies the eikonal equation
\begin{equation}\label{eikonal}
\tilde{h}_{0,j} \bigl(z, \nabla \hat{d}^j (z)\bigr):= -h_{0,j}\bigl(z, i\nabla \hat{d}^j (z)\bigr) = 0\; ,
\qquad z\in C_j\bigl(\Omega^j - x_j\bigr) 
\end{equation}
where $\Omega^j$ is some neighborhood of $x_j$
and fulfills
\begin{equation}\label{djentw}
 \bigl|\hat{d}^j(z) - \hat{d}^j_0(z)\bigr| = O(|z|^3) \quad (z\to C_j x_j)\qquad \text{for}\quad 
\hat{d}^j_0(z) =\sum_{\nu=1}^d \frac{\lambda^j_\nu}{2} z_\nu^2 \, ,
\end{equation}
where $\lambda^j_\nu >0, \, \nu=1,\ldots d,$ are the eigenvalues of $A_j$.

We remark that, assuming only Hypothesis \ref{hyp1},
it is possible that balls $B_r(x):=\{y\in\R^d\,|\, d(x,y)\leq r\}, r<\infty$, are unbounded in the Euclidean distance (and thus
not compact). In this paper, we shall not discuss consequences of this effect.

\begin{rem}\label{remhypmultwell}
Since $d$ is locally  Lipschitz-continuous, it follows from \eqref{agammasum} that for any
$B>0$ and any bounded region $\Sigma\subset \R^d$ there exists a constant
$C>0$ such that
\begin{equation}\label{agammasupnorm2}
\sum_{\natop{\gamma\in\disk}{|\gamma|<B}} \Bigl\|a_\gamma (\,.\, ; \ep)
e^{\frac{d(.,.+\gamma)}{\ep}}\Bigr\|_{l^\infty(\Sigma)} \leq C\; .
\end{equation}
\end{rem}

For a global estimate on the decay of $a_\gamma$ for large $\gamma$, we assume in addition

\begin{hyp}\label{Hypo3}
There exist constants $\eta>0$ and $C>0$ such that for all $x\in\R^d$
\[ \left\| a_{(.)} (x; \ep) e^{\frac{1}{\ep} d(x, x+\, . \, )} |\, . \, 
|^{\frac{d + \eta}{2}} \right\|_{\ell^2(\disk)} \leq C \; . \]
\end{hyp}

For the WKB-expansion of the eigenfunctions constructed in \cite{kleinro3} and the weighted
norm-estimates for the Dirichlet eigenfunctions shown in \cite{kleinro}, it was essential
that the potential $V_0$ had exactly one minimum.
To use these results it is necessary to restrict $H_\ep$ to regions around the wells, which exclude all other wells. 

For $\Sigma\subset\R^d$ we define the space
$\ell^2_{\Sigma_\ep}:= i_{\Sigma_\ep} \left(\ell^2(\Sigma_\ep)\right) \subset \ell^2(\disk)$ where
$ i_{\Sigma_\ep}$ denotes the embedding via zero extension.
Then we define the Dirichlet operator
\begin{equation} \label{HepD}
H_\ep^{\Sigma} :=\id_{\Sigma_\ep} H_\ep|_{\ell^2_{\Sigma_\ep}}  \;:\; \ell^2_{\Sigma_\ep} \rightarrow \ell^2_{\Sigma_\ep}
\end{equation}
with domain $\De (H_\ep^{\Sigma}) = \{u\in\ell^2_{\Sigma_\ep}\,|\, V_\ep u \in \ell^2_{\Sigma_\ep}\}$.

\begin{hyp}\label{hypIMj}
\ben
\item For $j\in\mathcal{C}$, we choose a
compact manifold $M_j\subset \R^d$ with $\Ce^2$-boundary such that $x_j\in M_j$,  $d^j\in\mathscr{C}^1(M_j)$ and
 $x_k\notin M_j$ for $k\in\mathcal{C}, \,k\neq j$.
\item
Given $M_j,\,  j\in\mathcal{C}$, let $I_\ep = [\alpha (\ep),\beta (\ep)]$ be an interval, such that
$\alpha (\ep),\beta (\ep) \to 0$ for $\ep\to 0$. Furthermore there
exists a function $a(\ep)>0$ with the property $|\log a(\ep)| =
o\left(\frac{1}{\ep}\right),\, \ep\to 0$, such that none of the
operators $H_\ep,H_\ep^{M_1},\ldots H_\ep^{M_r}$ has spectrum in
$[\alpha(\ep)-2a(\ep),\alpha(\ep)[$ or
$]\beta(\ep),\beta(\ep)+2a(\ep)]$.
\een
\end{hyp}

The lattice subset associated to $M_j$ is denoted by $M_{j,\ep} :=
M_j\cap \disk$ and we denote the eigenvalues of $H_\ep$ and of the Dirichlet operators
$H_\ep^{M_j}$ defined in \eqref{HepD} inside the spectral interval  $I_\ep$ and the corresponding real orthonormal systems
of eigenfunctions (these exist because all operators commute with complex conjugation)
\begin{eqnarray}\label{specHepusw}
\spec (H_\ep) \cap I_\ep = \{ \lambda_1,\ldots , \lambda_N\} \,
,&\quad&
u_1,\ldots ,u_N\in \ell^2\left(\disk \right)\\
\F := \Span \{u_1,\ldots u_N\} \nonumber\\
\spec \left(H_\ep^{M_j}\right) \cap I_\ep = \{ \mu_{j,1},\ldots,
\mu_{j,n_j}\} \, ,
&\quad& v_{j,1},\ldots,v_{j,n_j}\in \ell^2_{M_{j,\ep}},\, j\in {\mathcal C} \nonumber\\
\E_j := \Span \{ v_{j,1},\ldots, v_{j,n_j} \} \, , &\quad & \E :=
\bigoplus \E_j\; . \nonumber
\end{eqnarray}
We write 
\begin{equation}\label{valpha}
v_\alpha\quad\text{with}\quad \alpha =(\alpha_1, \alpha_2)\in \mathcal{J}:=\{(j,k)\,|\,j\in\mathcal{C},\, 1\leq k \leq n_j\}
\quad \text{and}\quad j(\alpha):= \alpha_1\, .
\end{equation}
We remark that the number of eigenvalues $N, n_j\, ,\, j\in\mathcal{C}$ with respect to $I_\ep$ as defined in 
\eqref{specHepusw} may depend on $\ep$.

For a fixed spectral interval it is shown in \cite{kleinro4} that the difference
between the exact spectrum and the spectra of Dirichlet
realizations of $H_\ep$ near the different wells is
exponentially small and determined by the Finsler distance between the two nearest neighboring wells.


The first result in this paper generalizes the results of \cite{kleinro3} where we constructed quasimodes for $H_\ep$
(with only one potential well $x_0=0$) in a small neighborhood of $x_0$.
Here we consider the case of several potential wells and construct quasimodes globally on $M_j$ under some additional
assumptions.

\begin{hyp}\label{hypomega}
Let $X_{\tilde{h}_0}$ denote the Hamiltonian vector field of $\tilde{h}_{0}$ defined in \eqref{tildehnull}, 
$F_{t}$ denote the flow of $X_{\tilde{h}_0}$ and set
\begin{equation}\label{Lambdaplus}
\Lambda_{\pm} := \bigl\{ (x,\xi)\in T^*\R^{d}\, |\, F_{t}(x,\xi) \rightarrow (x_j,0)\quad \text{for}
\quad t \rightarrow \mp \infty \bigr\}  \; .
\end{equation}
\ben
\item
For $j\in\mathcal{C}$, let $\O^j\subset \R^d$ containing $x_j$, such that the following
holds:
\ben
\item[(i)] For $\pi : T^*\R^d \rightarrow \R^d$ denoting the bundle projection $\pi (x, \xi) = x$, we have 
\[ \Lambda_+(\Omega^j):=\pi^{-1}(\Omega^j) \cap \Lambda_+ = \bigl\{ (x, \nabla d^j(x)) \in T^*\R^d\, |\, x\in \Omega^j\bigr\} \; . \]
\item[(ii)] $d^j\in \Ce^2(\Omega^j)$.
\item[(iii)] $\pi\bigl(F_t(x,\xi)\bigr) \in \Omega^j$ for all $(x,\xi)\in \pi^{-1}(\Omega^j) \cap \Lambda_+$ and all $t\leq 0$. 
\een
The restriction of $\O^j$ to the lattice is denoted by $\O^j_\ep$.
\item For $j\in\mathcal{C}$ let $M_j$ satisfy Hypothesis \ref{hypIMj} and assume in addition that $M_j\subset \Omega^j$
and that $(a)$ holds for $M_j$ replacing $\Omega^j$.
\een
\end{hyp}

\begin{figure}[htbp]
 \centering
\includegraphics[width=14cm]{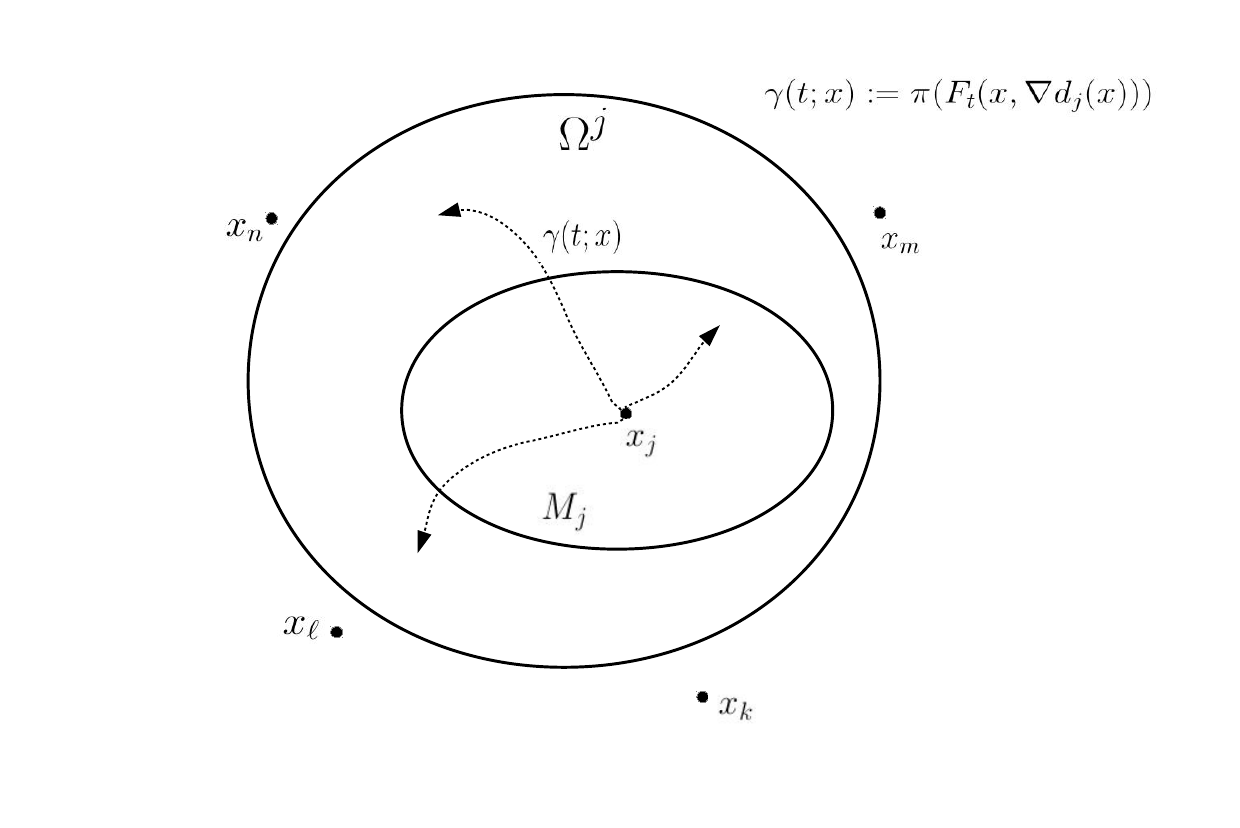}
\caption{The regions $M_j$ and $\Omega^j$ and the projection of the Hamilton flow to $x$-space}
\label{Bild}
\end{figure}

By \cite{kleinro}, Theorem 1.5, the base integral curves of $X_{\tilde{h}_0}$ on
$\R^d\setminus\{x_1,\ldots x_m\}$ with energy $0$ are geodesics with respect to
$d$ and vice versa.
Thus the above hypothesis implies in particular that there is a unique minimal geodesic
between any point in $\Omega^j$ and $x_j$.

Clearly, $\Lambda_+ (\Omega^j)$ is a Lagrange manifold (by (a)(i)) and since the flow $F_t$ preserves $\tilde{h}_0$, we have
$\Lambda_+(\Omega^j)\subset \tilde{h}_0^{-1}(0)$ by \eqref{Lambdaplus}. Thus the eikonal equation 
$\tilde{h}_0(x, \nabla d^j(x)) =0$ holds for $x\in\Omega^j$.
It follows from the construction of the solution of the
eikonal equation in \cite{kleinro3} that in fact $d^j\in \Ce^\infty(\O^j)$.

We recall from Remark 1.4 in \cite{kleinro3} that, locally near $x_j$, the Lagrange manifolds $\Lambda_{\pm}$ are parametrized by
$\pm d^j$ (this follows combining Lemma 1.3 in \cite{kleinro3} with the proof of Theorem 1.5 in \cite{kleinro}). Thus, geometrically speaking,
Hypothesis \ref{hypomega} (a) means that $\Lambda_{\pm}\bigl(\Omega^j\bigr)$ projects diffeomorphically to $\Omega^j$.\\

\begin{def}\label{tilded}
If $\hat{\Omega}^j$ denotes an open neighborhood of $\Omega^j$ given in Hypothesis \ref{hypomega}, let $\chi_j\in \Ce_0^\infty (\R^d)$ 
denote a cut-off function
with $\chi_j(x) = 1$ for $x\in \O^j$ and $\supp \chi_j \subset \hat{\Omega}^j$. Then we set 
\begin{equation}\label{djtilde}
 \tilde{d}^j (x) := \chi_j(x) d^j(x) + \bigl(1-\chi_j(x)\bigr)|x-x_j|\; .
\end{equation}
\end{def}

\begin{theo}\label{theoEjaj}
Let $H_\ep$ be a
Hamilton operator satisfying Hypotheses \ref{hyp1}, \ref{tildevarphihyp} and \ref{Hypo3}.
For $i, j\in\mathcal{C}$, let $\O^j, M_j$ satisfy
Hypothesis \ref{hypomega} and set $S_{ij}:= d(x_i, x_j)$. 
Furthermore we assume that $\ep E^j$ denotes an eigenvalue of $\widehat{H}_{0,q}^{j}$ defined in
\eqref{Hnullhutq} with multiplicity $m_j$. 
Then, for $j\in\mathcal{C}$, there are functions
$b_k^j\in\Ce_0^\infty\left(\R^d\times (0,\ep_0]\right), b_{k,\ell}^{j} \in \Ce_0^\infty(\R^d)\, , k=1,\ldots,m_j\, , \;
\ell\in\frac{\mathbb{Z}}{2}\,,\; \ell \geq -N_j$ for some $N_j\in\N$, supported in $\Omega_j$, such that for all
$M\in \frac{\mathbb{Z}}{2}$ there are
$C_M <\infty$ satisfying
\begin{equation}\label{arep}
\Bigl| b_k^j(z; \ep) - \sum_{\natop{\ell\in\hZ}{-N_j\leq \ell\leq M}}  \ep^\ell
 b_{k,\ell}^{j}(z)\Bigr| \leq C_M \ep^{M+\frac{1}{2}}\, , \quad (z\in\R^d)
\end{equation}
and
real functions $E_k^j(\ep)$ with asymptotic expansion
\begin{equation}\label{Erep}
E_k^j(\ep) \sim E^j + \sum_{s\in\frac{\N^*}{2}} \ep^s E_{k,s}^{j}\; ,
\end{equation}
such that
\ben
\item
for $\tilde{d}^j$ defined in \eqref{djtilde}, the functions 
\begin{equation}\label{hatvjk}
 \hat{v}_{j,k}:= \ep^{-\frac{d}{4}} e^{-\frac{\tilde{d}^j}{\ep}}  b_k^j 
\end{equation}
are almost orthonormal in the sense that 
\begin{equation}\label{orthokont}
\skpR{\hat{v}_{j,k}}{\hat{v}_{i,\ell}} = \delta_{ij}\delta_{k\ell}  + \delta_{ij}O(\ep^\infty) + 
O\bigl( \ep^{-(N_i+N_j+\frac{d}{2})} e^{-\frac{S_{ij}}{\ep}}\bigr)\; .
\end{equation}
\item for any $x_0\in \R^d$, the functions 
\begin{equation}\label{hatvjkep}
\hat{v}_{j,k, x_0}^{\ep} :=\ep^{\frac{d}{2}}r_{{\mathscr G}_{x_0}}\hat{v}_{j,k}
\end{equation}
on the lattice ${\mathscr G}_{x_0} =\disk + x_0$ are approximate eigenfunctions for the operator
$H_\ep$ with respect to the approximate eigenvalues given in \eqref{Erep}, i.e.,
\begin{equation}\label{eigenwgdisk}
e^{\frac{d^j(x)}{\ep}}\bigl(H_{\ep} - \ep E_k^j(\ep)\bigr) \hat{v}_{j,k, x_0}^{\ep}(x) =
O\left(\ep^{\infty}\right)  \,,
\quad (x\in M_j\cap {\mathscr G}_{x_0},\;\ep \to 0)   \, .
\end{equation}
\item for the restricted approximate eigenfunctions $\hat{v}_{j,k, x_0}^{\ep}$, we have
\begin{equation}\label{ortho2}
\Bigl\langle\hat{v}_{j,k, x_0}^{\ep}\, ,\, \hat{v}_{i,\ell, x_0}^{\ep}\Bigr\rangle_{\ell^2(\mathscr{G}_{x_0})}  = 
 \delta_{ij} \delta_{k\ell} + \delta_{ij}O(\ep^\infty) + 
O\bigl( \ep^{-(N_i+N_j+\frac{d}{2})} e^{-\frac{S_{ij}}{\ep}}\bigr)\; .
\end{equation}
\een
\end{theo}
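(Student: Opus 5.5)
The plan is to reduce to the one-well construction of \cite{kleinro3}, restated in the Appendix, and then extend the WKB functions from a small neighbourhood of $x_j$ to all of $\Omega^j$ by transport along the flow $F_t$. This extension is possible because of Hypothesis \ref{hypomega}.

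\textbf{Step 1: local formal expansions and the transport equations.} For fixed $j$, conjugate by $\tau_{x_j}$ and $U_j$ as in \eqref{Hstrichj}. This puts us in the setting of \cite{kleinro3}: one well at $0$, leading symbol diagonal at the well, and $\hat d^j$ solving \eqref{eikonal}. From there we take the formal expansions $\ep^{-d/4}e^{-\hat d^j/\ep}\sum_\ell \ep^\ell b^j_{k,\ell}$ and $E^j_k\sim E^j+\sum_s\ep^s E^j_{k,s}$, with $k=1,\dots,m_j$ spanning the $E^j$-eigenspace of $\widehat H^j_{0,q}$ after the degenerate perturbation theory done there. Negative powers $\ep^{\ell}$, $\ell\ge -N_j$, appear because the Hermite-type leading terms are polynomials of growing degree.

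Near $x_j$ these coefficients solve transport equations. Expand $e^{d^j/\ep}(\widehat H_\ep-\ep E)e^{-d^j/\ep}b$ using the analytic continuation of $t$ in $\xi$, justified by \eqref{agammasum}, and Taylor-expand $b(x+\gamma)$. This gives
\[(\partial_\xi\tilde h_0)(x,\nabla d^j)\cdot\nabla b_\ell+(\text{zeroth order terms})\,b_\ell=\text{source}(b_{\ell'},\ \ell'<\ell).\]
The vector field $\partial_\xi\tilde h_0(x,\nabla d^j(x))$ is the projection of $X_{\tilde h_0}$ restricted to $\Lambda_+(\Omega^j)$. By Hypothesis \ref{hypomega}(a)(i),(iii), every $x\in\Omega^j$ lies on an integral curve that stays in $\Omega^j$ for $t\le0$ and tends to $x_j$ as $t\to-\infty$. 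We therefore solve each transport equation as an ODE along these curves, with data given on a small sphere around $x_j$ by the local solution from \cite{kleinro3}. Since $d^j\in\Ce^\infty(\Omega^j)$ and the flow is smooth, this yields $b^j_{k,\ell}\in\Ce^\infty(\Omega^j)$. We then multiply by a cut-off equal to $1$ near $M_j$ and supported in $\Omega^j$.

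\textbf{Step 2: resummation and the eigenvalue equation.} Borel-resum in $\ep$ to get $b^j_k$ with \eqref{arep}, and similarly for $E^j_k(\ep)$. On $M_j$, $\tilde d^j=d^j$, and the cut-off is $1$ near $M_j$. Two kinds of error terms must then be controlled. The first are local Taylor remainders, which are $O(\ep^\infty)$ times $e^{-d^j(x)/\ep}$. The second come from translates $x+\gamma$ landing outside the region where the cut-off equals $1$ or outside $\Omega^j$. For these we use $d^j(x)-\tilde d^j(x+\gamma)\le d(x,x+\gamma)$ together with Hypothesis \ref{Hypo3} and Remark \ref{remhypmultwell}. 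This bound requires $\tilde d^j$ to still satisfy a triangle-type inequality, which holds where $\chi_j=1$. On the remaining set the cut-off makes $b$ vanish, so no inequality is needed there. The estimates are pointwise in $x$, so restricting to any shifted lattice $\mathscr G_{x_0}$ gives \eqref{eigenwgdisk}.

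\textbf{Step 3: almost orthonormality.} For $i=j$, the integrand is $e^{-2d^j/\ep}$ times a smooth function. Laplace's method at the nondegenerate minimum $x_j$, after the change of variables $U_j$, reduces the integral to the $L^2$ inner products of the harmonic oscillator eigenfunctions. These are orthonormal by the choice in \cite{kleinro3}, and all higher order corrections can be absorbed by Gram–Schmidt into the $b^j_{k,\ell}$, giving $\delta_{k\ell}+O(\ep^\infty)$. For $i\neq j$, use $\tilde d^i+\tilde d^j\ge S_{ij}$ on the supports, by the triangle inequality for $d$. Bounding $|b|\le C\ep^{-N}$ then gives the second error term in \eqref{orthokont}. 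For \eqref{ortho2}, replace integrals by Riemann sums $\ep^d\sum_{\mathscr G_{x_0}}$. The summand is smooth with all derivatives of size $\ep^{-|\alpha|}$ at scale $\sqrt\ep$, so a Poisson-summation argument gives errors of order $O(\ep^\infty)$ relative to the main term.

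\textbf{Main obstacle.} I expect the main obstacle to be the global solvability and smoothness of the transport equations up to $\partial\Omega^j$, specifically the matching at $x_j$, where the vector field vanishes. Here I would rely on the local analysis of \cite{kleinro3}, which gives smooth solutions in a full neighbourhood, and propagate outward only away from the singular point.
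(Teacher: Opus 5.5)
Your argument is correct in outline, and it takes a partly different route from the paper. It coincides with the paper on Borel summation in $\ep$, the passage to $\mathscr G_{x_0}$, Laplace's method for $i=j$, the bound $d^i+d^j\ge S_{ij}$ for $i\neq j$, and the Poisson-summation step for \eqref{ortho2}. The difference is in how you globalize the transport equations.

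\textbf{The paper's route.} It uses only the formal result, Theorem \ref{theo45}. It Borel-sums the formal Taylor series $\hat b^j_k$ in $z$ to smooth functions on $\O^j$, so that the residual is flat at $x_j$. It then solves $(Z\cdot\nabla+f)c=v$, with flat $v$, for flat corrections $c$ on all of $\O^j$ at once, integrating along characteristics from $t=-\infty$ as in Dimassi--Sj\"ostrand. Flatness, together with the exponential approach of the characteristics to the hyperbolic point $x_j$, makes these integrals converge, so the singular point never has to be treated separately.

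\textbf{Your route.} You import from \cite{kleinro3} genuine solutions of the transport equations on a small neighbourhood of $x_j$. These exist: they are the local quasimodes the introduction refers to, although they are not restated in the appendix. You then extend them by a regular Cauchy problem, which needs only smooth dependence of ODE solutions on initial data. The price is that you need an initial hypersurface met exactly once by every characteristic. A small Euclidean sphere in the $x$-variables need not be transversal, because the symmetric part of $DZ(x_j)=2B_jD^2d^j(x_j)$ can be indefinite. A level set $\{d^j=\delta\}$ always works: $Z\cdot\nabla d^j=\nabla_\xi\tilde t_0(x,\nabla d^j)\cdot\nabla d^j>0$ on $\O^j\setminus\{x_j\}$ by Hypothesis \ref{hyp1}(a)(ii),(v) and the eikonal equation.

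\textbf{Three smaller points need fixing.}

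First, in Step 2, take translates $x+\gamma$ with $x\in M_j$ that land where your cut-off is not $1$. The inequality $d^j(x)-d^j(x+\gamma)\le d(x,x+\gamma)$, combined with Hypothesis \ref{Hypo3} or Remark \ref{remhypmultwell}, only bounds this contribution by $C\sup|b^j_k|=O(\ep^{-N_j})$, not $O(\ep^\infty)$. Smallness comes from the fact that such $\gamma$ satisfy $|\gamma|\ge c>0$. Then \eqref{agammasum}, with $B$ larger than a Lipschitz constant $L$ of $d^j$ on the convex hull of $\supp b^j_k$, gives a factor $e^{-(B-L)c/\ep}$.

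Second, in Step 3 no Gram--Schmidt is needed. Theorem \ref{theo45}(a) already gives orthonormality in $\langle\cdot,\cdot\rangle_{\mathcal{A}_j}$, which is exactly the Laplace expansion of the weighted $L^2$ product, and your functions have the $\hat b^j_k$ as Taylor series at $x_j$. Gram--Schmidt would mix $b^j_k$ whose eigenvalue expansions may differ, and you would then have to check separately that \eqref{eigenwgdisk} survives.

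Third, in the Poisson argument, the summand's derivatives must be $O(\ep^{-|\alpha|/2})$ in $L^1$, i.e.\ variation on scale $\sqrt\ep$ against lattice spacing $\ep$, as verified in Di Ges\`u's Corollary C.2. Bounds of size $\ep^{-|\alpha|}$, as you wrote, would give no gain.
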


The second result of this paper is the following theorem, in which we compare
the asymptotic eigenfunctions with respect to $\ep E^j$ derived in Theorem \ref{theoEjaj} 
with the exact Dirichlet eigenfunctions associated to a
spectral interval around this eigenvalue of diameter $C_0 \ep^{\frac{3}{2}}$.

\begin{theo}\label{exactasym}
Let $H_\ep$  satisfy Hypotheses \ref{hyp1}, \ref{tildevarphihyp} and \ref{Hypo3}.
For $j\in\mathcal{C}$, let $\O^j, M_j$ satisfy
Hypothesis \ref{hypomega}. 
Furthermore we assume that $\ep E^j$ denotes an eigenvalue of $\widehat{H}_{0,q}^{j}$ defined in
\eqref{Hnullhutq} with multiplicity $m_j$ and we set $I_\ep\bigl(E^j\bigr) = \bigl[\ep E^j - C_0 \ep^{\frac{3}{2}}, \ep E^j + C_0 \ep^{\frac{3}{2}}\bigr]$.
Let $v_{j,k},\, 1\leq k \leq m_j,$ denote real orthonormal eigenfunctions of the Dirichlet-operator $H_\ep^{M_j}$ defined in
\eqref{HepD} with respect to the spectral intervall $I_\ep\bigl(E^j\bigr)$.
Let $\hat{v}^\ep_{j,k}:= \hat{v}_{j,k,0}^\ep, \, 1\leq k \leq m_j,$ be the WKB-functions associated to $\ep E_j$, 
as defined in Theorem \ref{theoEjaj}, \eqref{hatvjkep}
and set $\tilde{v}_{j,k}^{\ep} := \sum_{\ell=1}^{m_j} c_{k,\ell}(\ep) \hat{v}^\ep_{j,\ell}$ where the orthogonal matrix $c_{k,\ell}(\ep)$ is given
in \eqref{cjkfurv} for $\ep>0$ small enoug.

Then for every compact set $K\subset \mathring{M}_j$ and any $N\in\N$ 
\[ \left\| e^{\frac{d^j}{\ep}} \left(v_{j,k}-\tilde{v}_{j,k}\right)
\right\|_{\ell^2(K_\ep)} = O\left(\ep^N\right) \; , \qquad (\ep\to 0) \] 
where $K_\ep:= K\cap \disk$.
\end{theo}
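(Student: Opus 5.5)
We compare the WKB quasimodes of Theorem \ref{theoEjaj} with the exact Dirichlet eigenfunctions on $M_j$. First we establish a weighted (Agmon-type) a priori estimate for the Dirichlet resolvent, and then we project the quasimodes onto the exact eigenspace. Fix $j$ and write $\hat v_k:=\hat v^\ep_{j,k}$ and $v_k:=v_{j,k}$. Let $\chi\in\Ce_0^\infty(\mathring M_j)$ be a cut-off equal to $1$ near $K$, and set $w_k:=\id_{M_{j,\ep}}\chi\hat v_k\in \ell^2_{M_{j,\ep}}$. By \eqref{eigenwgdisk}, $e^{d^j/\ep}(H^{M_j}_\ep-\ep E^j_k)w_k$ is $O(\ep^\infty)$ away from $\supp\nabla\chi$. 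Near $\supp(1-\chi)$ the commutator term $[T_\ep,\chi]\hat v_k$ carries the weight $e^{-\tilde d^j/\ep}$ times powers $\ep^{-N_j}$. Choosing $\chi$ with support in $\{d^j\le S-\delta\}$ and $=1$ on $\{d^j\le S-2\delta\}\supset K$, and using Remark \ref{remhypmultwell} and Hypothesis \ref{Hypo3} to control the nonlocal part of $T_\ep$, we obtain
\[
\bigl\|e^{\varphi/\ep}(H^{M_j}_\ep-\ep E^j_k)w_k\bigr\|_{\ell^2}=O(\ep^\infty)
\]
for a weight $\varphi$ equal to $d^j$ on $K$ and slightly below $d^j$ elsewhere, namely $\varphi=\min(d^j,S-3\delta)$ suitably smoothed. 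The orthonormality relation \eqref{ortho2} gives $\langle w_k,w_\ell\rangle=\delta_{k\ell}+O(\ep^\infty)$.

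Next, let $\Pi$ be the spectral projection of $H^{M_j}_\ep$ onto $I_\ep(E^j)$. Since $\ep E^j_k-\ep E^j=O(\ep^{3/2})$ and, by the harmonic approximation (\cite{kleinro}, \cite{kleinro3}), the remaining spectrum of $H^{M_j}_\ep$ has distance $\ge c\ep$ from $\ep E^j$, the spectral theorem yields $\|(1-\Pi)w_k\|=O(\ep^\infty)$. Hence $\Pi w_k=\sum_\ell \langle v_\ell,w_k\rangle v_\ell$, and the Gram matrix $(\langle v_\ell,w_k\rangle)$ is orthogonal up to $O(\ep^\infty)$. This also shows that $\dim\operatorname{ran}\Pi\ge m_j$, and together with the spectral counting that the dimension equals $m_j$. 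We take $c_{k\ell}$ to be the orthogonalized inverse, which is presumably \eqref{cjkfurv}, so that $v_k=\sum_\ell c_{k\ell}\Pi w_\ell+O(\ep^\infty)$ in $\ell^2$. This leaves the weighted estimate of $r_k:=(1-\Pi)w_k$ with weight $e^{\varphi/\ep}$.

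For the weighted estimate we use the Agmon estimate of \cite{kleinro} in its weighted resolvent form. For $\Phi=(1-\theta\ep\log... )$, or more simply $\Phi=\varphi-C\ep\log$-corrections in the region $\{V_0\le\eta\}$, we have
\[
\|e^{\Phi/\ep}u\|\le C\ep^{-1}\,\|e^{\Phi/\ep}(H^{M_j}_\ep-z)u\|
\]
for $u\in\operatorname{ran}(1-\Pi)$ and $z$ near $\ep E^j$. The weighted operator is obtained by conjugating, $e^{\Phi/\ep}H e^{-\Phi/\ep}=H+\mathcal O$, and its real part is bounded below via $\tilde h_0(x,\nabla\Phi)\le0$ from \eqref{eikonal2}. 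Far from $x_j$ the potential gives a gap of size $c$. Near $x_j$, on a ball of radius $C\sqrt\ep$, the weight is bounded, and there we use the gap on $\operatorname{ran}(1-\Pi)$. Applying this inequality to $r_k$ with $z=\ep E^j_k$, and writing $(H-z)r_k=(1-\Pi)(H-z)w_k$, we have to commute $e^{\Phi/\ep}$ past $\Pi$. We do this by writing $\Pi w_k$ explicitly through the $v_\ell$, which satisfy the weighted decay $\|e^{d^j/\ep}\ep^{N'}v_\ell\|\le C$ from \cite{kleinro}. Each such term is $O(\ep^\infty)$ times polynomially bounded factors. The result is $\|e^{\varphi/\ep}r_k\|=O(\ep^\infty)$. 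Restricting to $K$, where $\varphi=d^j$ and $\chi=1$, yields the claim for $v_k-\sum_\ell c_{k\ell}\hat v_\ell$.

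We expect the main obstacle to be the precise weighted resolvent estimate with the full weight $d^j$, with no $\ep\log$ loss, on $K$. Standard Agmon estimates lose a factor $e^{\delta/\ep}$, or a polynomial factor, which here must be compensated. Our first approach is to use the weight $\Phi=d^j$ exactly in $\Omega^j$ and to exploit that $H$ conjugated by $e^{d^j/\ep}$ applied to $e^{-d^j/\ep}b$ is a transport operator, as in the WKB construction. For the polynomial losses near $x_j$ we absorb them by taking $M$ large in \eqref{arep}, since the error is $O(\ep^\infty)$ anyway.
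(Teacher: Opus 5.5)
Your architecture is a legitimate alternative to the paper's. You build a cut-off quasimode $w_k=\id_{M_{j,\ep}}\chi\hat v_k$, split it with the spectral projection $\Pi$ of $H^{M_j}_\ep$ on $I_\ep(E^j)$, control $\Pi w_k$ through the a priori decay of the $v_\ell$ (Proposition \ref{weigMj}), and reduce to a weighted resolvent bound on $\operatorname{ran}(1-\Pi)$. The paper instead applies Lemma \ref{HepDchi} directly to $w=v_{j,k}-\tilde v_{j,k}$. It uses $\|w\|=O(\ep^\infty)$ for the $F_-$-term near $x_j$, \eqref{gewichtraufK} on compacts, and only the polynomial bound \eqref{gewichtraufM} near $\partial M_j$.

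The gap is the weight. Your scheme needs a $\Ce^2$ weight $\varphi$ with three properties: it equals $d^j$ on $K$ up to factors polynomial in $\ep$; it lies at least $N\ep\log\frac{1}{\ep}$ below $d^j$ wherever $\chi<1$; and it keeps the Agmon positivity $V_0-\tilde t_0(x,\nabla\varphi)\geq -C\ep$. Your choice $\varphi=\min(d^j,S-3\delta)$ is a function of $d^j$ alone. The conditions $\chi\in\Ce_0^\infty(\mathring M_j)$ and $\chi=1$ on $\{d^j\le S-2\delta\}\supset K$ then force $\max_K d^j<\min_{\partial M_j}d^j$. The theorem, however, is claimed for every compact $K\subset\mathring M_j$, and $M_j$ is not a sublevel set of $d^j$: Hypothesis \ref{hypomega} only asks that backward flow lines stay in $M_j$. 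For $K$ reaching toward the part of $\partial M_j$ where $d^j$ is large, your construction gives nothing. No weight of the form $g(d^j)$ can repair this. Requiring $g(s)\approx s$ at $s=d^j(x)$, $x\in K$, and $g(s')\le s'-N\ep\log\frac{1}{\ep}$ at a smaller value $s'=d^j(y)$, $y$ near $\partial M_j$, forces $g'>1$ somewhere away from $x_j$. There $V_0-\tilde t_0(x,\nabla\varphi)$ becomes negative of order $N\ep\log\frac{1}{\ep}\gg\ep$, which destroys the Agmon positivity.

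What is missing is a weight modified along the flow rather than by level sets of $d^j$. The paper takes $\Psi_N=\Phi+\zeta N\ep\log\frac{1}{\ep}$, where $\Phi$ is the log-corrected phase of \cite{kleinro}. The function $\zeta$ solves the Cauchy problem $Z\cdot\nabla\zeta=-f\le 0$ with $\zeta=1$ near $K_1\supset\hat K$ (all minimal geodesics from $K$ to $x_j$), and $f$ is large enough that $\zeta\le 0$ near $\partial M_j$. Since $Z=\nabla_\xi\tilde h_0(x,\nabla d^j)$, the first-order change of $V_0-\tilde t_0(x,\nabla\Psi_N)$ is $N\ep\log\frac{1}{\ep}\,f\ge 0$. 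The quadratic error $O\bigl((\ep\log\frac{1}{\ep})^2\bigr)$ is $o(\ep)$. This is exactly where Hypothesis \ref{hypomega} enters. With the weight $\Psi_N-N\ep\log\frac{1}{\ep}$, and $\chi<1$ only in the collar near $\partial M_j$ where $\zeta\le 0$, your projection argument could be completed.

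Your last paragraph also misplaces the difficulty. Logarithmic and polynomial losses are harmless, because every error term is $O(\ep^\infty)$. By contrast, the weight $d^j$ itself gives no coercivity, since $V_0-\tilde t_0(x,\nabla d^j)=0$ on $\Omega^j$ by the eikonal equation. In particular, where the weight equals $d^j$ (e.g.\ on $K$) there is no gap ``of size $c$'' from the potential away from $x_j$. The positivity of order $B\ep$ comes only from the logarithmic correction in $\Phi$, cf.\ \eqref{VepundVPhiN}.
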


We remark that $c_{k,\ell}(\ep)=0$ if $E^j_k(\ep)$ is not 
asymptotically equal to $\mu_{j,\ell}$ and $\bigl(c_{j,\ell}\bigr)_{1\leq j,\ell \leq m_j}$ can be chosen to be the identity matrix if all $E^j_k(\ep)$ have 
different expansions. \\

The outline of the paper is as follows.

Section \ref{compexas} consists of the proof of Theorem \ref{theoEjaj}, in Section \ref{compexas2} we prove Theorem \ref{exactasym}.
In the Appendix, we restate (adapted versions of) former resluts used in the proofs.

\section{Proof of Theorem \ref{theoEjaj}}\label{compexas}


For $j\in\mathcal{C}$, we use  the system of formal power series  solutions 
$\hat{b}^j_k\in\mathcal{A}_j$ of $H'_{\ep,j,\mathcal{A}}$ (orthonormal with respect to $\langle\, ,\, \rangle_{\mathcal{A}_j}$) with associated
eigenvalues $\ep E^j_k (\ep)$ given in Theorem \ref{theo45}.   

By Borel summation (with respect to $z$), we can find ${\mathscr C}^\infty$-functions
$g^j_{k\ell}$, compactly supported in $C_j\bigl(\O^j-x_j\bigr)$, possessing $\hat{b}^j_{k}$ as Taylor
series at $C_jx_j$. We define a formal asymptotic series in $\O^j$ by
\[ g^j_k(x; \ep) := \sum_{\natop{\ell\in\hZ}{\ell\geq -N}} \ep^\ell U_j g^j_{k\ell}(x-x_j) \; , \]
where $U_j$ is the unitary transformation given in \eqref{unitrans}.
Then by Theorem \ref{theo45} and since $\tilde{d}^j=d^j$ on $\O^j$ and $\tilde{\hat{d}}^j = \hat{d}^j$ on $C_j\bigl(\O^j-x_j\bigr)$, with
$x= x_j + z$,
\begin{equation}\label{WKBmitb}
e^{\frac{\hat{d}^j (z)}{\ep}}(\hat{H}'_{\ep,j} - \ep E^j_k(\ep))
e^{-\frac{\hat{d}^j (z)}{\ep}} \sum_{\natop{\ell\in\hZ}{\ell\geq -N}} \ep^\ell g^j_{k\ell}(z) = 
e^{\frac{d^j(x)}{\ep}}(\hat{H}_{\ep} - \ep E^j_k(\ep))
e^{-\frac{d^j(x)}{\ep}} g^j_{k}(x,\ep)
= r^j_k(x; \ep)
\end{equation}
in the sense of formal power series where $r^j_k(x; \ep) = \sum_{\natop{\ell\in\hZ}{\ell\geq -N}} \ep^\ell
r^j_{k\ell}(x)$ has the property, that each $r^j_{k\ell}$ is in $\Ce_0^\infty(\Omega^j)$ and vanishes to
infinite order at $x=x_j$. It remains to show that it is possible to
modify the functions $U_jg^j_{k,\ell}(\cdot - x_j)$ by (uniquely
determined) functions $c^j_{k\ell}$ vanishing at $x_j$ to
infinite order such that, for the resulting functions
$\tilde{b}^j_{k\ell} := U_jg^j_{k\ell}(\cdot - x_j) - c^j_{k\ell}$, the
formal series
\begin{equation}\label{formsera}
\tilde{b}^j_k(x; \ep):= \sum_{\natop{\ell\geq -N}{\ell\in \Z/2}}\ep^\ell \tilde{b}^j_{k\ell}(x)
\end{equation}
solves, for $x\in M_j$, the equation
\begin{equation}\label{nullinomega3}
e^{\frac{d^j (x)}{\ep}}(\widehat{H}_{\ep} - \ep E^j_k(\ep))
e^{-\frac{d^j (x)}{\ep}} \tilde{b}^j_{k}(x,\ep )=  0
\end{equation}
in the sense of formal power series.
To this end, we have to show that the equation
\begin{equation}\label{diffc}
e^{\frac{\tilde{d}^j (x)}{\ep}}(\widehat{H}_{\ep} - \ep E^j_k(\ep))
e^{-\frac{\tilde{d}^j (x)}{\ep}} c^j_{k}(x,\ep )=
r^j_k(x; \ep)
\end{equation}
has a unique formal power series solution
$c^j_k(x; \ep) = \sum_{\ell\geq -N} \ep^\ell c^j_{k\ell} (x)$ with
coefficients $c^j_{k\ell}\in \mathscr{C}^\infty(\R^d)$ vanishing
to infinite order at $x=x_j$.
By the definition of $\widehat{H}_\ep$ in \eqref{hatHep},
the assumptions in Hypothesis \ref{hyp1} and \eqref{theoev}, the left hand side of \eqref{diffc} is given by
\begin{multline}\label{WKBord}
e^{\frac{\tilde{d}^j(x)}{\ep}} \biggl[ \widehat{T}_\ep + \widehat{V}_\ep
- \ep\,\Bigl(E^j + \sum_{m\in\N^*/2}\ep^m E^j_{km}\Bigr)\biggr]
e^{-\frac{\tilde{d}^j(x)}{\ep}}
\sum_{\natop{\ell\geq -N}{\ell\in \Z/2}}\ep^\ell c^j_{k\ell}(x) \\
=\sum_{\natop{\ell\geq -N}{\ell\in \Z/2}}\ep^\ell \biggl\{
\sum_{\gamma\in\disk}\Bigl[\sum_{m\in\N} \ep^m a_{\gamma}^{(m)}(x)
e^{\frac{1}{\ep}(\tilde{d}^j(x)-\tilde{d}^j(x+\gamma))}c^j_{k\ell}(x+\gamma)\Bigr] \\
 + \sum_{m\in\N /2}\ep^m \left(V_m(x) - \ep
E^j_{km}\right)c^j_{k\ell}(x)\biggr\}
\end{multline}
where we set $E^j_{k0}:= E^j$ and $V_m=0$ for $m\notin \N$.
To get the different orders in $\ep$ of the kinetic term, i.e. the first summand in rhs\eqref{WKBord}, we
expand $\tilde{d}^j$ and $c^j_{k\ell}$ at $x$ and set $\eta
:= \frac{\gamma}{\ep}\in \Z^d$. 
Taylor expansion gives
\begin{equation}\label{entphi}
 \frac{1}{\ep}\left(\tilde{d}^j(x)-\tilde{d}^j(x+\ep\eta)\right) =
-\nabla\tilde{d}^j(x)\cdot\eta - \frac{\ep}{2}
D^2 \tilde{d}^j|_x[\eta]^2 -
 \frac{\ep^2}{2}\int_0^1 (1-t)^2 D^3\tilde{d}^j|_{x+t\ep \eta}[\eta]^3\, dt
\end{equation}
and
\begin{equation}\label{entc}
c^j_{k,\ell}(x+\ep\eta) = c^j_{k,\ell}(x) + \ep\eta\cdot \nabla c^j_{k,\ell}(x) +
\ep^2 \int_0^1(1-t) D^2 c^j_{k,\ell}|_{x+t\ep\eta} [\eta]^2\, dt\; .
\end{equation}
Combining \eqref{entphi} with the
expansion of the exponential function at zero gives
\begin{multline}\label{entephi}
e^{\frac{1}{\ep}(\tilde{d}^j(x)-\tilde{d}^j(x+\ep\eta))} =
e^{-\nabla\tilde{d}^j(x)\cdot\eta}
 \Bigl(
1-\frac{\ep}{2}D^2 \tilde{d}^j|_x[\eta]^2 + \frac{\ep^2}{4}
\bigl(D^2 \tilde{d}^j|_x[\eta]^2 \bigr)^2\!\!\!+ O\left(\ep^4\right)\Bigr)\times\\
 \times \Bigl(1 - \frac{\ep^2}{2}\int_0^1 (1-t)^2 D^3\tilde{d}^j|_{x+t\ep \eta}[\eta]^3
 \, dt + O\left(\ep^4\right)\Bigr) \bigl( 1+O\bigl(\ep^3\bigr)\bigr)\; .
\end{multline}
To lowest order (i.e. to order $\ep^{-N}$), equation \eqref{diffc} is given by
\begin{equation} \label{WKB-N}
\bigl( t_0 (x, \nabla \tilde{d}^j(x)) + V_0(x)\bigr) c^j_{k,-N}(x) = r^j_{k,-N}(x)\; . 
\end{equation}
By the eikonal equation \eqref{eikonal} (which holds in $\O^j$ by Hypothesis \ref{hypomega}), the left hand side of \eqref{WKB-N} 
vanishes in $\Omega^j$. The same argument
applies equation \eqref{diffc} to order $\ep^{-N+\frac{1}{2}}$:
\[
\bigl( t_0 (x, \nabla \tilde{d}^j(x)) + V_0(x)\bigr) c^j_{k,-N+\frac{1}{2}}(x) = r^j_{k,-N+\frac{1}{2}}(x)\; . 
\]
 The first
non-vanishing term in the expansion of \eqref{diffc} arises from the action of the first order part
of the conjugated operator on $c^j_{k,-N}(x)$, which is given by
\begin{multline}\label{firstorder}
 \biggl\{\sum_{\gamma\in\disk}e^{-\frac{1}{\ep}\nabla\tilde{d}^j(x)\cdot\gamma}
 \Bigl[ a_{\gamma}^{(0)}(x)
 \Bigl( \tfrac{1}{\ep} \gamma\cdot\nabla - \tfrac{1}{2\ep} \skp{\gamma}{ D^2 \tilde{d}^j|_x
 \gamma}\Bigr) +
a_{\gamma}^{(1)}(x)\Bigr]  + V_1(x) - E\biggr\} c^j_{k,-N}(x) \\
= r^j_{k,-N+1} \, .
\end{multline}
This equation takes the form
\begin{equation}\label{allform}
\bigl(\mathcal{P}(x,\partial_x) + f(x)\bigr)u(x) = v(x)
\end{equation}
for the differential operator
\begin{equation}\label{mathP}
\mathcal{P} (x,\partial_x) = Z(x)\cdot \nabla \quad\text{for}\quad 
Z(x) := \sum_{\eta\in \Z^d}
a^{(0)}_{\ep\eta}(x) e^{-\eta\cdot \nabla\tilde{d}^j (x)} \eta\; ,
\end{equation}
which is well defined by the exponential decay of $a^{(0)}_{\ep\eta}$ (see Hypothesis
\ref{hyp1}(a)(iv) and since 
\begin{equation}\label{tay8}
\left|\nabla_x \tilde{d}^j|_{\sqrt{\ep}y}\right| = O(\sqrt{\ep}) \quad\text{and}\quad
\partial_x^\alpha \tilde{d}^j|_{\sqrt{\ep}y}=O(1), \;|\alpha|>1\, ,
\qquad (\ep\to 0)\; .
\end{equation}
It follows from the definition of $\tilde{h}_0$ in \eqref{tildehnull} that $Z$ is the velocity field
\begin{equation}\label{velofield}
 Z(x) = \nabla_\xi \tilde{h}_0 \bigl(x, \xi=\nabla \tilde{d}^j(x)\bigr)\; .
\end{equation}
Thus, by Hypothesis \ref{hypomega}, $Z(x)$ is the projection of the  Hamilton field $X_{\tilde{h}_0}$ of $\tilde{h}_0$, evaluated on the
Lagrange manifold $\Lambda_+(\Omega^j)$, onto the configuration space.

We define a cut-off-function $\zeta_j\in\Ce^\infty_0 (\Omega^j)$ such that $\zeta_j(x)=1$ for $x\in M_j$ and we set 
$\tilde{c}^j_{k,-N}:= \zeta_j c^j_{k,-N}$. 
The next and all higher order
equations in \eqref{diffc} result from the action of the first order part of the
conjugated operator given in \eqref{firstorder} on the respective
highest order part of $c^j_k$, which for the $\ell$-th order is
the term $c^j_{k,\ell-1}$. Additionally to the first order
equation, a term is produced by the action of higher order terms of the
conjugated operator on lower order terms of $c^j_k$, which we replace by $\tilde{c}^j_k$. Since
these lower order terms are already determined by the preceding
transport equations, this additional part can be treated as an
additional inhomogeneity of \eqref{allform}. Thus all transport
equations take the form \eqref{allform}
with $f\in{\mathscr C}^\infty\left(\R^d\right)$ and $v\in \Ce^\infty_0(\O^j)$
vanishing to infinite order at $x=x_j$ by the construction of the
formal series \eqref{aj}. In lowest order (i.e. in order $\ep^{-N}$ and $\ep^{-N+\frac{1}{2}}$) the transport equation \eqref{allform}
is homogeneous (i.e. $v=0$).

In order to show that the transport equations can be solved in the space of $\Ce^\infty$-functions vanishing to
infinite order at $x_j$, we first remark that $x_j$ is a singular point of the
vector field $Z$. In fact, $\nabla_\xi t_0(x, 0)=0$ for any $x\in\R^d$ by \eqref{kinen} und $\nabla d^j(x_j)=0$ since, by the eikonal equation together
with the assumptions on $V_0$ in Hypothesis \ref{hyp1}, the distance $d^j$ has a non-degenerate minimum at $x_j$, i.e. 
$D^2 d^j|_{x_j}$ is a positive definite, symmetric matrix.

By \eqref{velofield} together with \eqref{kinen} and \eqref{tildehnull} we get for some $\alpha>0$ 
\begin{equation}\label{LinZ} 
 \skp{x}{D Z|_{x_j}x} = \skp{x}{- 2B(x_j) D^2 d^j|_{x_j} x} \leq - \alpha |x|^2 \, ,\qquad x\in\R^d\, .
\end{equation}
If $\mu (t)$ denotes for $t\in (-\infty ,0]$ the integral curve of $Z$ with $\mu (0) = x_j$, then, by \eqref{LinZ}, $\mu (t)$ 
approaches $x_j$ exponentially fast
(see  e.g. \cite{Wa}). Moreover, since by Hypothesis \ref{hypomega} the integral curves $\mu$ joining any point
in $\Omega^j$ with $x_j$ lie within $\Omega_j$, it is possible to use the method of characteristics and the variation of constants formula
described e.g. in the proof of Proposition 3.5 in Dimassi-Sj\"ostrand \cite{dima}. It follows that \eqref{allform} has a
unique solution $u\in\Ce^\infty(\Omega^j)$ vanishing to infinite order at $x_j$. 

Since in each step the solution was multiplied with the cut-off-function $\zeta_j$, this gives the required compactly supported solution of \eqref{diffc} 
and thus defines $\tilde{b}^j_k$ in
\eqref{formsera} supported in $\Omega^j$ and solving \eqref{nullinomega3} in a neighborhood of $M_j$.

A Borel procedure with respect to $\ep$ gives us a function $b^j_k\in{\mathscr C}^\infty\left(\R^d\times
[0,\ep_0)\right)$ representing the asymptotic sum
$\tilde{b}^j_k(x; \ep)$ given in \eqref{formsera} which is supported in $\O^j$ and solves \eqref{nullinomega3} in a neighborhood
of $M_j$.
Analogously we define a real function $E_j(\ep)$ as an asymptotic
sum
\[
E^j_k(\ep) \sim E^j + \sum_{k\in\frac{\N^*}{2}} \ep^s E^j_{ks}\; .
\]
To make the step from $\widehat{H}_\ep$ acting on
$\Ce_0^\infty\left(\R^d\right)$ to the operator $H_\ep$ acting on
lattice functions in ${\mathcal K}\left(\disk\right)$, we use that
$\mathscr{G}_{x_0}$ is invariant under
the action of $\hat{H}_\ep$.
Thus the restriction to the lattice commutes with $\hat{H}_\ep$
and, using the restriction operator
$r_{\mathscr{G}_{x_0}}$, yields
\eqref{eigenwgdisk}.
We therefore have proven Theorem \ref{theoEjaj} (b) .

For $i=j$, the approximate orthonormality \eqref{orthokont} follows from the orhonormality with respect to $\langle\, ,\,\rangle_{\mathcal{A}_j}$ 
of the expansion $\hat{b}^j_k$ given in \eqref{aj} (Theorem \ref{theo45}). This has to be combined with a standard estimate of Laplace type
($ \int e^{-\frac{x^2}{\ep}} O(x^\infty)\, dx = O(x^\infty)$).
If $i\neq j$, \eqref{orthokont} follows immediately from the estimate
\begin{equation}\label{orthoinichtj}  
\skpR{\hat{v}_{j,k}}{\hat{v}_{i,\ell}} = \ep^{-\frac{d}{2}} \int_{\R^d} b^j_k(x; \ep) b^i_\ell (x; \ep) e^{-\frac{1}{\ep}(d^j(x) + d^i(x))} \, dx \leq 
\ep^{-\frac{d}{2}}e^{-\frac{S_{ij}}{\ep}}\int_{M_j\cap M_i}b^j_k(x; \ep) b^i_\ell (x; \ep)\, dx
\end{equation}
where we used the triangle inequality for $d$. Since $M_k$ is compact for any $k\in \mathcal{C}$, \eqref{orthokont} follows.

The estimate \eqref{ortho2}
for the restricted approximate eigenfunctions
follows for $i=j$ from (a) combined with the general fact that for $a\in\Ce^\infty_0(\R^d,\R)$ and $\varphi\in\Ce^\infty(\R^d, \R)$ satisfying
$\varphi(x_0)=0$ and $D^2\varphi|_{x_0}>0$ for some $x_0\in\R^d$ and $\varphi(x)>0$ for $x\in\supp a\setminus \{x_0\}$ we have
\begin{equation}\label{intsum}
 \ep^{d} \sum_{x\in\disk} a(x) e^{-\frac{\varphi(x)}{\ep}} = \int_{\R^d} a(x) e^{-\frac{\varphi (x)}{\ep}} \, dx + O\bigl(\ep^\infty\bigr)\; .
\end{equation}
This estimate follows from Proposition C.1 combined with the proof of Corollary C.2 in Di Ges\`u \cite{giacomo}. 
In fact, since it is shown in the proof of \cite{giacomo}, Corollary C.2, that 
$\int_{\R^d} \bigl|\partial_y^\alpha a(hy)e^{-\frac{\varphi (hy)}{h^2}}\bigr| \, dy \leq C_\alpha$ independent of $h$
for any $\alpha\in \N^d$, we can use \cite{giacomo}, Proposition C.1, to get
\[  h^{2d} \sum_{y\in (h\Z)^d} a(hy) e^{-\frac{\varphi(hy)}{h^2}} = h^d \int_{\R^d} a(hy) e^{-\frac{\varphi(hy)}{h^2}}\, dy + O\bigl(h^\infty\bigr) \]
and the substitutions $h=\sqrt{\ep}$ and $x = \sqrt{\ep}y$ give the right hand side of \eqref{intsum}.

For $i\neq j$, the arguments are as in \eqref{orthoinichtj}  with summation replacing integration.


\section{Proof of Theorem  \ref{exactasym}}\label{compexas2}

In spirit, Theorem \ref{exactasym} and its proof follow  Helffer-Sj\"ostrand \cite{hesjo}. 
A major technical difference is due to the fact that for second
order differential operators phase functions $\varphi$ of Lipschitz type work in Agmon estimates. In our context, $\varphi \in \Ce^2$ is 
essential (see also \cite{kleinro}). To prove Theorem \ref{exactasym}, it is crucial to modify the phase function used in \cite{kleinro}, and
here we have to differ from \cite{hesjo} to get $\varphi \in\Ce^2$.

First, we need the following result on the exponential decay of eigenfunctions $v_{j,k}$ of the Dirichlet operator 
$H_\ep^{M_j}$ at $x_j, j\in\C,$ associated to a 
spectral interval $I_\ep$.

\begin{prop}\label{weigMj}
Given Hypotheses \ref{hyp1}, \ref{tildevarphihyp}, \ref{Hypo3} and
\ref{hypIMj}, we assume that $I_\ep\subset [0, \ep R]$ for some $R>0$. Then, using the notation in \eqref{specHepusw},
there exists a number $N_0\in\N$ such that for all
$j\in\mathcal C$ and $1\leq k \leq n_j$ and for all
$\ep\in(0,\ep_0]$
\[ \Bigl\| e^{\frac{d^j}{\ep}}v_{j,k} \Bigr\|_{\ell^2(M_{j,\ep})} = O\left(\ep^{-N_0}\right) \; .\]
\end{prop}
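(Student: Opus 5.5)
This is an Agmon-type estimate, and I would derive it from the weighted energy inequality established for one-well Dirichlet operators in \cite{kleinro} (Theorem 1.7 / Proposition 3.1 there). By Hypothesis \ref{hypIMj}, $M_j$ contains only the well $x_j$, so $H_\ep^{M_j}$ is effectively a one-well operator. That estimate has the following form. Let $\varphi$ be Lipschitz (and of class $\Ce^2$ where needed) on $M_j$ with $\tilde h_0(x,\nabla\varphi(x))\le 0$, and let $F:=V_\ep+\tilde t_0(\cdot,\nabla\varphi)-V_0$ be the effective potential appearing after conjugation. Then for real $v$ supported in $M_{j,\ep}$,
\[
\skpd{e^{\varphi/\ep}v}{F\, e^{\varphi/\ep}v}\;\le\; \skpd{e^{\varphi/\ep}v}{e^{\varphi/\ep}H^{M_j}_\ep v}+C\ep\,\|e^{\varphi/\ep}v\|^2,
\]
up to errors controlled by Hypothesis \ref{Hypo3} and Remark \ref{remhypmultwell}. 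These handle the nonlocal hopping terms $a_\gamma\tau_\gamma$ through $e^{(\varphi(x)-\varphi(x+\gamma))/\ep}\le e^{d(x,x+\gamma)/\ep}$.

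For the phase I would take the standard truncation
\[
\varphi(x)=d^j(x)-B\ep\log\Bigl(\tfrac{d^j(x)}{\ep}\Bigr)\ \text{ on } \{d^j\ge B\ep\},\qquad \varphi=d^j-B\ep\log B \ \text{ on } \{d^j<B\ep\},
\]
suitably smoothed so that $\varphi\in\Ce^2(M_j)$, with $B$ large. Near $x_j$, $d^j\sim |x-x_j|^2$, so the region $\{d^j\le B\ep\}$ is a ball of radius about $\sqrt{B\ep}$. On its complement the eikonal inequality gives $\tilde h_0(x,\nabla\varphi)\le -c\,B\ep$, since $|\nabla d^j|^2\gtrsim d^j$ and the strict convexity of $\tilde t_0$ in $\xi$ applies. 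Hence, for $B$ large relative to $R$ and $C_T$, we get $F-\lambda\ge \ep$ there for all eigenvalues $\lambda\le \ep R$. On the small region $\{d^j<B\ep\}$ we have $|F-\lambda|\le C\ep$ and $e^{\varphi/\ep}\le C$.

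Next I would apply the inequality to $v=v_{j,k}$ with $H^{M_j}_\ep v=\mu v$ and split $\ell^2(M_{j,\ep})$ into the two regions. The outer region is bounded below by $\ep\|e^{\varphi/\ep}v\|^2_{\text{out}}$ and the inner region contributes at most $C\ep\|v\|^2=C\ep$. This yields $\|e^{\varphi/\ep}v\|_{\ell^2(M_{j,\ep})}\le C$. Since $e^{d^j/\ep}\le C\,(d^j/\ep)^{B}e^{\varphi/\ep}$ and $d^j$ is bounded on the compact set $M_j$, we obtain $\|e^{d^j/\ep}v\|\le C\ep^{-B}$, so $N_0=B$ works. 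Because $B$ depends only on $R$ and the constants of the hypotheses, the bound is uniform in $j,k$ and in $\ep$.

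The main obstacle is the regularity of the weight. The energy inequality for difference operators requires $\varphi\in\Ce^2$ (as stressed in this section), which is needed to control the second-order Taylor remainder in $e^{(\varphi(x)-\varphi(x+\gamma))/\ep}$ uniformly in $\gamma$. Hypothesis \ref{hypIMj} only gives $d^j\in\Ce^1(M_j)$. I would therefore replace $d^j$ by $(1-\delta)d^j$ regularized by convolution at scale $\ep$, paying an $O(1)$ factor in the exponent change at scale $\ep$. I expect this to be manageable: by Hypothesis \ref{Hypo3} the Lipschitz bound on the increments $d(x,x+\gamma)$ suffices for the off-diagonal terms, and the $\Ce^2$ smoothness is only needed near $x_j$, where $d^j$ is smooth by \cite{kleinro3}.
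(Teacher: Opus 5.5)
Your main argument is the paper's proof unpacked. The paper applies \cite{kleinro}, Theorem 1.8, with $\Sigma=M_j$ and $d^0$ replaced by $d^j$, after modifying $V_0$ away from $M_j$ into a one-well potential centred at $x_j$. That theorem is proved by exactly your scheme:
\begin{itemize}
\item the ground-state identity $\langle u, e^{\varphi/\ep}H_\ep^{M_j}e^{-\varphi/\ep}u\rangle\ge\langle u,(\hat V_\ep+V^{\varphi})u\rangle-C\ep\|u\|^2$ for real $u$, with $V^\varphi$ as in \eqref{Vphi};
\item the logarithmically corrected phase \eqref{DefPhix};
\item a split at $d^j=B\ep$, followed by a choice of $B$.
\end{itemize}
Running this directly on $M_j$ with the exact eigenfunction is fine. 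It spares you the modification of $V_0$: you only need $\tilde h_0(x,\nabla d^j)\le 0$ and $V_0\ge c\,d^j$ on $M_j$. It also keeps the weight on the inner region, which is the point of the footnote.

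Two small corrections to that part.
\begin{itemize}
\item The effective potential is $\hat V_\ep+V^\varphi=(\hat V_\ep-V_0)-\tilde h_0(\cdot,\nabla\varphi)+O(\ep)$. Your $F=V_\ep+\tilde t_0-V_0$ has the wrong sign in front of $\tilde t_0$, and your step $\tilde h_0\le -cB\ep\Rightarrow F-\mu\ge\ep$ only works with the correct sign.
\item To choose $B$ you must check that $\sup_{M_j}|\nabla\varphi|$ and $\sup_{M_j}|D^2\varphi|$, hence your constant $C$, do not depend on $B$. This holds since $d^j\asymp|x-x_j|^2$ near $x_j$; compare \eqref{wahlB} and the discussion below Lemma \ref{HepDchi}.
\end{itemize}

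The genuine problem is your last paragraph. The argument needs a phase built from $d^j$ itself that is $\Ce^2$ with bounded Hessian on all of $M_j$. This is what the proof of \cite{kleinro}, Theorem 1.8, invoked by the paper, works with (compare the Taylor step \eqref{Phi2Taya}). Your repair cannot replace it, for three reasons.

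First, with $\delta>0$ fixed, the weight $(1-\delta)d^j$ costs a factor $e^{\delta\max_{M_j}d^j/\ep}$. You would only get $\|e^{(1-\delta)d^j/\ep}v_{j,k}\|=O(\ep^{-N_0})$, not the stated polynomial bound with weight $e^{d^j/\ep}$.

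Second, mollifying a merely $\Ce^1$ function at scale $\ep$ changes the weight only by $O(1)$, but its Hessian is then only $o(\ep^{-1})$. For $\gamma=\ep\eta$ the remainder in \eqref{Phi2Taya} becomes $o(1)|\eta|^2$. So the error $D_2$, and likewise the defect in the eikonal inequality, is $o(1)$ instead of $O(\ep)$. This swamps the margin $B\ep$ on which the whole argument rests. Enlarging $\delta$ to absorb such an error costs $e^{c\delta/\ep}$, which is polynomial only if $\delta=O(\ep\log\tfrac{1}{\ep})$.

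Third, it is not true that $\Ce^2$ is needed only near $x_j$. The $O(\ep)$-accurate replacement of $V^\varphi(x)$ by $-\tilde t_0(x,\nabla\varphi(x))$, with a single covector, is what lets you use the eikonal inequality at every $x\in M_j$. The bound $|\varphi(x+\gamma)-\varphi(x)|\le d(x,x+\gamma)$ is no substitute. Since $d(x,x+\ep\eta)\approx\ep\sup\{\xi\cdot\eta:\tilde h_0(x,\xi)\le0\}$, it estimates each $\gamma$ with its own maximizing covector. Consequently $\sum_\gamma|a^{(0)}_\gamma(x)|\bigl(\cosh\tfrac{d(x,x+\gamma)}{\ep}-1\bigr)$ is in general larger than $V_0(x)$ by an amount of order one away from $x_j$.

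So drop the regularization and run your argument with the smoothed truncation of $d^j$ itself, using $d^j\in\Ce^2(M_j)$. If you read Hypothesis \ref{hypIMj} as providing only $\Ce^1$, then this $\Ce^2$ input is exactly what is missing. The paper's appeal to \cite{kleinro} relies on it too, and no regularization at scale $\ep$ supplies it.
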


\begin{proof} 
We apply the decay estimates for Dirichlet eigenfunctions proven in \cite{kleinro}, Theorem 1.8,
to $H_\ep^{M_j}$, formally replacing $\Sigma$ by $M_j$ and $d^0=d(0,.)$ by $d^j$. To justify this, we remark that, 
without changing $H_\ep^{M_j}$ and $d^j$ on $M_j$, we could modify the potential $V_0$ such that the assumptions 
of \cite{kleinro} are fulfilled up
to the fact that now $V_0$ has (exactly one) minimum at $x_j$ instead of $x_0=0$. One now observes that 
\cite{kleinro}, Theorem 1.8, and
its proof remain valid, if zero is replaced by an arbitrary point $x_j\in\R^d$. 
\end{proof}

In the next proposition, we show that if $\ep E^j$ denotes an eigenvalue of multiplicity $m_j$ of the harmonic oscillator $\hat{H}^j_{0,q}$ 
associated to the Dirichlet operator $H_\ep^{M_j}$,
then $H_\ep^{M_j}$ has $m_j$ eigenvalues inside the interval $[\ep E^j - C_0\ep^{3/2}, \ep E^j + C_0 \ep^{3/2}]$ for some $C_0>0$.

\begin{prop}\label{corbijM1}
Assuming Hypotheses \ref{hyp1}, \ref{tildevarphihyp}, \ref{Hypo3} and \ref{hypIMj}, let $\hat{H}^j_{0,q}$ be the harmonic oscillator defined in
\eqref{Hnullhutq} associated to the Dirichlet operator $H_\ep^{M_j}$ on $M_j\, , j\in\mathcal{C},$ given in \eqref{HepD}. 
Then there exists a bijection $b: \spec (H_\ep^{M_j}) \cap I_\ep \ra
\spec (\hat{H}^j_{0,q})\cap I_\ep$ and a constant $C_0>0$ such that for all
$\ep\in (0,\ep_0]$
\[ |b(\lambda) - \lambda | \leq C_0 \ep^{\frac{3}{2}} \; . \]
\end{prop}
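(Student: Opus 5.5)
The plan is to reduce Proposition \ref{corbijM1} to the harmonic approximation already proven in the one-well setting of \cite{kleinro2} (Theorem 1.2 there), exactly as Proposition \ref{weigMj} was reduced to \cite{kleinro}. The first step is the localization. Without changing $H_\ep^{M_j}$ on $\ell^2_{M_{j,\ep}}$, we modify $V_0$ (and the higher $V_\ell$) outside $M_j$ so that the modified potential has exactly one non-degenerate minimum, at $x_j$, and stays bounded below by a positive constant away from $x_j$. After translating by $x_j$, which is harmless because $\disk + x_j$ is again a lattice and the results of \cite{kleinro2} are stated for arbitrary points, we are in the one-well framework.

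The second step is to relate the Dirichlet operator $H_\ep^{M_j}$ to the full one-well operator $\tilde H_\ep$ with the modified potential. Since $x_j\in\mathring M_j$, we use an IMS-type localization formula for difference operators, as developed in \cite{kleinro2}. Take a quadratic partition of unity $\chi_1^2+\chi_2^2=1$ with $\chi_1$ supported in $\mathring M_j$ and equal to $1$ near $x_j$. The localization error $\sum_m \chi_m[[T_\ep,\chi_m],\chi_m]$ is $O(\ep^2)$ by \eqref{agammasum}. On $\supp\chi_2$ both $V_\ep$ and the modified potential exceed a constant $c>0$, and together with \eqref{Tvonunten} this pushes that part of the spectrum above $[0,\ep R]$.

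Comparing both operators with $\chi_1$-localized quasimodes then shows that their spectra in $I_\ep\subset[0,\ep R]$ coincide up to $O(\ep^2)$. For the upper bound we use Gaussian quasimodes built from the harmonic oscillator eigenfunctions; these are exponentially localized, so cutting them off costs $O(\ep^\infty)$. For the lower bound we use the min-max principle with the IMS estimate.

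The third step is the harmonic approximation for the one-well operator, which is the content of \cite{kleinro2}, Theorem 1.2. There we Taylor expand the symbol $h_0+\ep(t_1+V_1)$ at $(x_j,0)$ using \eqref{kinen}. In the rescaling $x-x_j=\sqrt{\ep}\,y$, the cubic terms in $x$ and $\xi$ contribute $O(\ep^{3/2})$ relative to $\hat H^j_{0,q}$, and this is the source of the $C_0\ep^{3/2}$. The min-max principle in both directions gives a bijection between eigenvalues counted with multiplicity, provided $I_\ep$ is separated from the rest of both spectra. This separation is supplied by the gap condition in Hypothesis \ref{hypIMj}, since the eigenvalues of $\hat H^j_{0,q}$ are spaced at order $\ep$.

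The main obstacle is this last bookkeeping. The bijection must respect multiplicities near the endpoints of $I_\ep$, and the gap $a(\ep)$ in Hypothesis \ref{hypIMj} must be large enough compared with $\ep^{3/2}$ that no eigenvalue slips across an endpoint. I would handle this by restricting to intervals of the form $[\ep(E-c),\ep(E+c)]$ around the oscillator eigenvalues, where the gap is of order $\ep$, and applying the min-max comparison separately in each such interval.
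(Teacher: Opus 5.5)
\textbf{The gap is in your third step}, where you claim that Taylor expansion plus ``min-max in both directions'' yields the error $C_0\ep^{3/2}$. Only the upper bound comes out this way: your Gaussian/Hermite quasimodes do have residual $O(\ep^{3/2})$.

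The lower bound in a min-max argument needs an IMS localization to a neighbourhood of $x_j$ of size $\ell$. The localization error is of order $\ep^2/\ell^2$, while the cubic Taylor remainder on that neighbourhood is of order $\ell^3$. Forcing $\ell^3\lesssim\ep^{3/2}$ requires $\ell\lesssim\sqrt{\ep}$, and then $\ep^2/\ell^2\gtrsim\ep$, which is the size of the level spacing itself. Balancing the two errors gives at best $\ell=\ep^{2/5}$ and an error $O(\ep^{6/5})$. That is exactly what \cite{kleinro2} proves, namely $E_k(\ep)=\ep e_k+O(\ep^{6/5})$. So you cannot cite it, or the min-max principle, for the $\ep^{3/2}$ rate. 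The $\ep^{3/2}$ precision is the whole content of the proposition beyond the standard harmonic approximation, and it is what later fixes the width of $I_\ep(E^j)$.

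\textbf{The missing idea is to separate counting from localization.}

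\emph{Counting.} Use the $O(\ep^{6/5})$ harmonic approximation only to show that $H^{M_j}_\ep$ has exactly $m_j$ eigenvalues in $[\ep E^j-c\ep,\ep E^j+c\ep]$ for a small $c>0$. You can transfer the result from $H_\ep$ to $H^{M_j}_\ep$ by your one-well modification of $V_\ep$ outside $M_j$. Alternatively, as the paper does, use the exponentially accurate bijection of \cite{kleinro4}, Theorem 1.4.

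\emph{Localization.} Take $m_j$ almost orthonormal quasimodes whose residual, measured against the single value $\ep E^j$, is $O(\ep^{3/2})$. The paper uses the WKB functions of Theorem \ref{theoEjaj}. Their approximate eigenvalues satisfy $\ep E^j_k(\ep)=\ep E^j+O(\ep^{3/2})$. Their residual for $H^{M_j}_\ep$ is $O(\ep^\infty)$, by \eqref{eigenwgdisk} together with the exponentially small boundary term bounded as in \eqref{rjkabsch}; that computation does not use the present proposition, so there is no circularity.

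Now apply the spectral theorem. Suppose the spectral projection of $H^{M_j}_\ep$ for $[\ep E^j-C\ep^{3/2},\ep E^j+C\ep^{3/2}]$ had rank less than $m_j$. Then some normalized $\psi$ in the span of the quasimodes would be orthogonal to its range. This gives $\|(H^{M_j}_\ep-\ep E^j)\psi\|\geq C\ep^{3/2}$. By \eqref{ortho2} the Gram matrix is close to the identity, so the residual bound gives $\|(H^{M_j}_\ep-\ep E^j)\psi\|\leq C'\ep^{3/2}$, a contradiction once $C>C'$. Hence at least $m_j$ eigenvalues lie within $C\ep^{3/2}$ of $\ep E^j$. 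The count from the first step then forces all of them into this window.

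Your Gaussian quasimodes would also serve here, provided you use them in this spectral-theorem argument rather than only in a Rayleigh quotient. Your handling of the endpoints of $I_\ep$, working in windows of width of order $\ep$ around the oscillator eigenvalues, is sensible. It is in fact what the later use in Proposition \ref{E10tildeE10} requires.
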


\begin{proof}
In \cite{kleinro4}, Theorem 1.4, we proved that there is a bijection $b: \spec (H_\ep) \cap I_\ep \ra
\spec (H_\ep^{M_j})\cap I_\ep$ such that $b(\lambda) - \lambda = O \bigl(e^{-\frac{C}{\ep}}\bigr)$ for
some $C>0$ and moreover that $\# \spec (H_\ep) \cap I_\ep = O\bigl(\ep^{-N}\bigr)$
for some $N\in\N$ (both in the limit $\ep \to 0$). 
We combine this with a result on the low lying spectra of $H_\ep$ and the associated harmonic oscillator which, for  $A_j, B_j$ in 
\eqref{Hnullhutq} and $\tilde{A}^j:= B_j A_j B_j$, is given by
\begin{equation}\label{harmos}
K = \bigoplus_j K_j\, , \qquad\text{where}\quad K_j (x):= -\Delta + \skp{x}{\tilde{A}^j x} +  V_1(x_j) + t_1(x_j,0)\; .
 \end{equation}
Let $e_k$ denote the $k$-th eigenvalue of $K$ (counting mulitipliciy) and let $E_k(\ep)$ denote the $k$-th
eigenvalue of $H_\ep$ then it is shown in \cite{kleinro2}, Theorem 1.3, that $E_k(\ep) = \ep e_k + O\bigl(\ep^{\frac{6}{5}}\bigr)$
as $\ep\to 0$. 
Together with the results on the approximating eigenvalues given in Theorem \ref{theoEjaj} these results prove Proposition \ref{corbijM1}. 
\end{proof}

By use of Theorem \ref{theoEjaj} and Proposition \ref{corbijM1}, the
next proposition on the distance of eigenspaces follows from
Helffer-Sj\"ostrand \cite{hesjo}, Proposition 1.4. and 2.5, which we recall in the appendix (Proposition \ref{dEFA}).

\begin{prop}\label{E10tildeE10}
For $j\in \mathcal{C}$, let $\mathcal{E}_{j}$ denote the eigenspace of $H_\ep^{M_j}$
for the interval $I_\ep(E^j) = [\ep E^j - C_0\ep^{3/2}, \ep E^j + C_0 \ep^{3/2}]$, where $\ep E^j$ is an eigenvalue of the harmonic
oscillator $\hat{H}^j_{0,q}$ defined in \eqref{Hnullhutq} and $C_0>0$ is some constant. 
Let $\tilde{\mathcal{E}}_{j}$ denote the span of $\{\hat{v}^\ep_{j,1}, \ldots , \hat{v}^\ep_{j, m_j}\}$, where
$\hat{v}^\ep_{j,k}:= \hat{v}^\ep_{j,k,0},\, 1\leq k \leq m_j,$ are the approximate eigenfunctions with respect to 
$\ep E^j$ defined in \eqref{hatvjkep}.
Then, setting $\vec{\dist}(\E, \F) :=  \| \Pi_\E - \Pi_\F \Pi_\E\|$, we get
\begin{equation}\label{distEj}
\vec{\dist}\bigl(\tilde{\E}_{j},\E_{j}\bigr) = \vec{\dist}\bigl(\mathcal{E}_{j},\tilde{\mathcal{E}}_{j}\bigr) =
O\left(\ep^\infty\right)\; . 
\end{equation}
Moreover, the eigenvalues of $H_\ep^{M_j}$
in $I_\ep(E^j)$ are given by $\mu_{j,k} = \ep E^j_k (\ep) +
O\left(\ep^\infty\right),\, 1\leq k \leq m_j$, where $E^j_k(\ep)$ is given in \eqref{Erep}.
\end{prop}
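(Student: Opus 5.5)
The plan is to apply the abstract Helffer--Sj\"ostrand result recalled in the appendix (Proposition \ref{dEFA}). For a self-adjoint operator $A$, suppose we have approximate eigenpairs $(\mu_k,\psi_k)$, $1\leq k\leq m$, such that the $\psi_k$ are almost orthonormal and $\|(A-\mu_k)\psi_k\|\leq \delta$. Suppose further that an interval $I$ contains all $\mu_k$, and that $\spec(A)\cap(I+[-a,a])\subset I$ with gap $a$. Then, for the spectral subspace $E$ of $A$ in $I$ and $F=\Span\{\psi_k\}$, we get $\vec{\dist}(F,E)=O(\delta/a)$. If in addition $\dim E = m$, then also $\vec{\dist}(E,F)=O(\delta/a)$.

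We take $A=H_\ep^{M_j}$ and $\psi_k=\hat v^\ep_{j,k}$, restricted to $M_{j,\ep}$ (zero extension), with $\mu_k=\ep E^j_k(\ep)$ and $m=m_j$.

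\textbf{Step 1: the residual.} By Theorem \ref{theoEjaj}(b), $e^{d^j/\ep}(H_\ep-\ep E^j_k)\hat v^\ep_{j,k}=O(\ep^\infty)$ pointwise on $M_{j,\ep}$. Since $d^j\geq 0$, and $M_{j,\ep}$ has $O(\ep^{-d})$ points, this gives an $\ell^2(M_{j,\ep})$ bound $O(\ep^\infty)$. The Dirichlet operator differs from $H_\ep$ applied to $\id_{M_j}\hat v^\ep_{j,k}$ only through the terms $a_\gamma\tau_\gamma$ that reach outside $M_j$. On $\R^d\setminus M_j$ near $\partial M_j$, the function $\hat v_{j,k}$ is of size $\ep^{-N_j-d/4}e^{-d^j/\ep}$, and $d^j$ is bounded below by a positive constant on $\partial M_j$, since $x_j\in\mathring M_j$ for the relevant geometry. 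Combined with the decay \eqref{agammasum} of $a_\gamma$, this error is exponentially small. Almost-orthonormality of the $\psi_k$ in $\ell^2(M_{j,\ep})$ follows from \eqref{ortho2} with $i=j$, because the cut-off outside $M_j$ again only costs exponentially small terms. Hence $\delta=O(\ep^\infty)$.

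\textbf{Step 2: the gap.} By \eqref{Erep}, all $\ep E^j_k(\ep)=\ep E^j+O(\ep^{3/2})$, so these values lie well inside $I_\ep(E^j)$ for $C_0$ large. By Proposition \ref{corbijM1}, the spectrum of $H_\ep^{M_j}$ near $\ep E^j$ consists of exactly $m_j$ eigenvalues within $C_0\ep^{3/2}$ of $\ep E^j$. The neighbouring harmonic-oscillator levels are at distance $\geq c\ep$. So $\spec(H_\ep^{M_j})$ has no points in $\{\lambda : C_0\ep^{3/2}<|\lambda-\ep E^j|<c\ep/2\}$, which gives the gap $a=c\ep/2$. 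Then $\delta/a=O(\ep^\infty)$, and this yields $\vec{\dist}(\tilde\E_j,\E_j)=O(\ep^\infty)$.

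\textbf{Step 3: the reverse distance and the eigenvalues.} Since $\dim\E_j=m_j=\dim\tilde\E_j$ and $\vec{\dist}(\tilde\E_j,\E_j)<1$, the projection $\Pi_{\E_j}$ restricted to $\tilde\E_j$ is a bijection. The standard argument in \cite{hesjo} then gives the symmetric distance $\vec{\dist}(\E_j,\tilde\E_j)=O(\ep^\infty)$.

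For the eigenvalues, compute the matrix of $H_\ep^{M_j}$ on $\E_j$ in the basis obtained by projecting the $\psi_k$ and Gram--Schmidt orthonormalising. Up to $O(\ep^\infty)$ this matrix equals $\diag(\ep E^j_k(\ep))$, because $\langle\psi_\ell,(A-\mu_k)\psi_k\rangle=O(\ep^\infty)$ and the Gram matrix is $I+O(\ep^\infty)$. Its eigenvalues are the $\mu_{j,k}$, and perturbation of Hermitian matrices (Weyl's inequality) gives $\mu_{j,k}=\ep E^j_k(\ep)+O(\ep^\infty)$ after suitable ordering.

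I expect the main obstacle to be Step 1. It requires checking that the boundary truncation error is controlled. This uses that $\hat v_{j,k}$ decays like $e^{-d^j/\ep}$ times a polynomially large factor, and that the translations $a_\gamma\tau_\gamma$ with large $|\gamma|$ are suppressed by \eqref{agammasum}. It also requires that the polynomial prefactor $\ep^{-N_j}$ is harmless against $O(\ep^\infty)$.
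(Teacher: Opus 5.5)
Your proposal is correct and follows essentially the paper's route. You apply Helffer--Sj\"ostrand's Proposition 2.5 (Proposition \ref{dEFA}) to $H_\ep^{M_j}$ with the restricted WKB functions, get an $O(\ep^\infty)$ residual from Theorem \ref{theoEjaj}(b) plus an exponentially small boundary commutator, take $\dim\E_j=m_j$ and the gap from Proposition \ref{corbijM1}, and use Helffer--Sj\"ostrand's Proposition 1.4 for the reverse distance. The differences are only in execution: you bound the commutator via sup norms, \eqref{agammasum} and the $O(\ep^{-d})$ lattice-point count rather than the paper's Cauchy--Schwarz with Hypothesis \ref{Hypo3}, and you spell out the Gram-matrix and Weyl-inequality steps the paper merely asserts. (The positive lower bound on $d^j$ is needed on $\supp b^j_k\setminus M_j$, not just on $\partial M_j$; this follows by compactness.)
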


\begin{proof}
By Proposition \ref{corbijM1},  it follows that $\dim \tilde{\E}_j = m_j = \dim \E_j$ and thus by \cite{hesjo}, Proposition 1.4,
$\vec{\dist}\bigl(\tilde{\E}_{j},\E_{j}\bigr) = \vec{\dist}\bigl(\mathcal{E}_{j},\tilde{\mathcal{E}}_{j}\bigr)$.

We estimate $\vec{\dist}\bigl(\tilde{\E}_{j},\E_{j}\bigr)$ using \cite{hesjo}, Proposition 2.5 (see Prop.  \ref{dEFA}).

Theorem  \ref{theoEjaj} and the definition of the Dirichlet operator in \eqref{HepD} gives
\begin{equation}\label{HMjmitrjk} 
H^{M_j}_\ep \hat{v}^\ep_{j,k} = \ep E^j_k(\ep)\id_{M_{j,\ep}} \hat{v}^\ep_{j,k} + \id_{M_{j,\ep}}[H_\ep , \id_{M_{j,\ep}}] \hat{v}^\ep_{j,k} +
 O\bigl(\ep^\infty\bigr) e^{-\frac{d^j}{\ep}}\; . 
\end{equation}
To estimate the norm $\delta$ of the remainder $r_j$ in Proposition \ref{dEFA}, we thus have to analyze  the norm of (cf. the proof of
\cite{kleinro4}, Theorem 1.4) 
\begin{equation}\label{rjkremain}
r_{j,k}(x) := \id_{M_{j,\ep}}[H_\ep,\id_{M_{j,\ep}}] \hat{v}^\ep_{j,k}(x) =
\id_{M_{j,\ep}}(x) \sum_{\gamma\in\disk} a_\gamma (x; \ep) \,\id_{M^c_{j,\ep}}(x+\gamma)\, \hat{v}^\ep_{j,k}(x+\gamma)
\end{equation}
We have
\begin{equation}\label{rjk1}
\|r_{j,k}\|^2_{\ell^2(\disk)} 
\leq \sum_{x\in M_{j,\ep}} \biggl(\sum_{\natop{\gamma\in\disk}{x+\gamma\notin M_{j,\ep}}} 
\Bigl|a_\gamma(x; \ep) \hat{v}^\ep_{j,k}(x+\gamma)\Bigr|\biggr)^2\; .
\end{equation}
To estimate the right hand side of \eqref{rjk1}, we use that, for some $C>0$, we have $\tilde{d}^j(x) > C$ for all $x\notin M_{j}$. 
Thus for $x\in M_{j,\ep}$, we get by \eqref{hatvjk}, 
\begin{equation}\label{rjkl2a}
\sum_{\natop{\gamma\in\disk}{x+\gamma\notin M_{j,\ep}}}\bigl| a_\gamma(x; \ep) \hat{v}^\ep_{j,k}(x+\gamma)\bigr|  =
 \sum_{\natop{\gamma\in\disk}{x+\gamma\notin M_{j,\ep}}} \Bigl|a_\gamma(x; \ep)
 \ep^{\frac{d}{4}}e^{-\frac{\tilde{d}^j(x + \gamma)}{\ep}}b^j_k(x+\gamma)\Bigr|
 \leq e^{-\frac{C}{\ep}} A(x)\; ,
 \end{equation}
where, using the notation $\langle \gamma\rangle_\ep := \sqrt{\ep^2 + |\gamma|^2}$ and
the Cauchy-Schwarz inequality, for any $\eta>0$ and $x\in M_{j,\ep}$
 \begin{align}
 A (x)&:= \sum_{\natop{\gamma\in\disk}{x+\gamma\notin M_{j,\ep}}} \Bigl|a_\gamma(x; \ep)
 b^j_k(x+\gamma)\Bigr| \nonumber\\
  &\leq \biggl(\sum_{\natop{\gamma\in\disk}{x+\gamma\notin  M_{j,\ep}}}
\Bigl| a_\gamma(x; \ep) \langle\gamma\rangle_\ep^{\frac{d+\eta}{2}}\Bigr|^2\biggr)^{\frac{1}{2}}
\biggl( \sum_{\natop{\gamma\in\disk}{x+\gamma\notin  M_{j,\ep}}}
\Bigl|  \langle\gamma\rangle_\ep^{-\frac{d+\eta}{2}} b^j_k(x+\gamma)\Bigr|^2 \biggr)^{\frac{1}{2}}\nonumber\\
&\leq C\biggl( \sum_{\natop{\gamma\in\disk}{x+\gamma\notin  M_{j,\ep}}}
\Bigl| \langle\gamma\rangle_\ep^{-\frac{d+\eta}{2}}
b^j_k(x+\gamma)\Bigl|^2 \biggr)^{\frac{1}{2}}\; ,\qquad x\in M_{j,\ep}\, , \label{rjkA}
\end{align}
where in the second step we choose $\eta$ according to Hypothesis
\ref{hypIMj} (a). Inserting \eqref{rjkl2a} and \eqref{rjkA} in \eqref{rjk1} and
changing the order of summation yields
\begin{align}
\|r_{j,k}\|^2_{\ell^2(\disk)} &\leq e^{-\frac{2C}{\ep}}\tilde{C}\bigl\|
b^j_k\bigr\|^2_{\ell^2_{M_{j,\ep}}}
\sum_{\gamma\in\disk}\langle\gamma\rangle_\ep^{-(d+\eta)}\nonumber\\
&\leq C e^{-\frac{\tilde{C}}{\ep}}\; .\label{rjkabsch}
 \end{align}
Inserting \eqref{rjkabsch} into \eqref{HMjmitrjk} gives the estimate $\delta = O\bigl(\ep^\infty\bigr)$ for $\delta$ as
in Proposition \ref{dEFA} and in addition the statement on the eigenvalues of $H_\ep^{M_j}$.

By the choice of $I_\ep (E^j)$ together with Proposition \ref{corbijM1} it follows that
the constant $a$ in Proposition \ref{dEFA} can be chosen as $O\bigl(\ep^{\frac{3}{2}}\bigr)$. 
Thus we get $\vec{\dist}(\tilde{\E_j}, \E_j) = O\bigl(\ep^\infty\bigr)$. 
\end{proof}

It follows from Proposition \ref{E10tildeE10} that there is an
orthogonal matrix $\left(c_{\ell,k}(\ep)\right)_{1\leq \ell,k \leq m_j}$, such that in $\ell^2(M_{j,\ep})$
\begin{equation}\label{cjkfurv}
v_{j,\ell} = \sum_{k=1}^{m_j} c_{\ell,k} \hat{v}^\ep_{j,k} +
O\left(\ep^\infty\right)\; ,
\end{equation}
where $v_{j, \ell}, \, \ell = 1, \ldots m_j,$ are real normalized eigenfunctions of $H_\ep^{M_j}$ with respect to
$I_\ep(E^j)$ as defind in \eqref{specHepusw}. The matrix $\left(c_{\ell,k}\right)$ can be chosen such that $c_{\ell,k}=0$
if $E^j_k$ is not asymptotically equal to $\mu_{j,\ell}$.
If all $E^j_\ell$ have different expansions, then
$\left(c_{\ell,k}\right)$ may be chosen as identity matrix.

\begin{Lem}\label{d0x<}
Under the assumptions of Hypothesis \ref{hypomega},
for any $x\in\Omega^j$, let $\gamma(t; x)$ denote the base integral
curve of $X_{\tilde{h}_0}$ given by
 \begin{equation}\label{defintku}
 (-\infty,0]\, \ni t\mapsto \pi\circ F_{t}(x,\nabla d^j(x)) =: \gamma(t; x).
\end{equation}
Let $y\in\O^j$ be such that $y\notin \{x_j\}\cup \{\gamma(t; x))\;|\;
-\infty <t \leq  0\}$, then
\begin{equation}\label{ungl} 
d^j(x) < d^j(y) + d(y, x) \; . 
\end{equation}
\end{Lem}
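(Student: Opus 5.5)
The triangle inequality already gives $d^j(x)\le d^j(y)+d(y,x)$. So the plan is to assume equality and derive a contradiction with the uniqueness of minimal geodesics supplied by Hypothesis \ref{hypomega}.

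Suppose $d^j(x)=d^j(y)+d(y,x)$ for some $y\in\Omega^j$ with $y\ne x_j$ and $y$ not on the curve $\gamma(\cdot;x)$. Since $y\in\Omega^j$, there is a minimal geodesic $\sigma_1$ from $x_j$ to $y$. By Hypothesis \ref{hypomega}(a)(i),(iii) together with \cite{kleinro}, Theorem 1.5, we may take $\sigma_1$ to be the base integral curve $t\mapsto\gamma(t;y)$ of $X_{\tilde{h}_0}$ on $\Lambda_+(\Omega^j)$, suitably reparametrized, and its length is $d^j(y)$. Next choose a curve $\sigma_2$ from $y$ to $x$ whose length is within $\delta$ of $d(y,x)$. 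The concatenation $\sigma_1\ast\sigma_2$ then joins $x_j$ to $x$ with length at most $d^j(x)+\delta$. Passing to the limit (or using directly that a minimizer exists, as in \cite{kleinro}), we obtain a minimal curve from $x_j$ to $x$ that passes through $y$.

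A minimal curve is a geodesic, and by \cite{kleinro}, Theorem 1.5, it is a base integral curve of energy $0$ away from the wells. Near $x$ it stays in $\Omega^j$, since $\Omega^j$ is a neighbourhood of $x$. Hypothesis \ref{hypomega} makes the minimal geodesic between $x$ and $x_j$ unique, namely $\gamma(\cdot;x)$. So the concatenated curve must coincide with $\gamma(\cdot;x)$ up to parametrization, and hence $y=\gamma(t;x)$ for some $t\le0$, which is a contradiction.

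To make this uniqueness step rigorous, I would argue locally via the eikonal equation. Along the minimizing curve $c(s)$ consider $f(s)=d^j(c(s))$. Minimality and the triangle inequality force $f$ to grow exactly at the rate of the Finsler length, $\frac{d}{ds}f=\langle\nabla d^j,\dot c\rangle=\ell(c,\dot c)$. By the Fenchel/Legendre equality case, using that $\tilde{h}_0(c,\nabla d^j(c))=0$ on $\Omega^j$ and that $\tilde{h}_0$ is strictly convex in each fibre, this forces $\dot c$ to be parallel to $\nabla_\xi\tilde{h}_0(c,\nabla d^j(c))$. Hence $c$ is an integral curve of the velocity field $Z$, or equivalently the projection of the flow on $\Lambda_+(\Omega^j)$. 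The curve $c$ ends at $x$ and, by (a)(iii), stays in $\Omega^j$, so ODE uniqueness forces it to be $\gamma(\cdot;x)$.

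I expect the main obstacle to be this equality-case step. It has two delicate parts. First, I must check that the whole minimizing curve stays inside $\Omega^j$, where $d^j\in\Ce^2$ and the eikonal equation holds; if the curve left $\Omega^j$ I would only have the inequality $\tilde h_0\le0$. For this I would use that the portion from $y$ back to $x_j$ is $\gamma(\cdot;y)\subset\Omega^j$ by (a)(iii), and a continuity argument near $x$. Second, the case of $x$ or $y$ close to the singular point $x_j$ needs separate handling, which I would treat using $\nabla d^j(x_j)=0$.
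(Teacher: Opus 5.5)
Your proof has the same skeleton as the paper's. Assuming equality in \eqref{ungl}, you produce a length-minimizing curve from $x_j$ to $x$ through $y$ and contradict uniqueness of the minimal geodesic $\gamma(\cdot;x)$. The difference is how that uniqueness is justified. The paper cites uniqueness of minimal geodesics from Bao--Chern--Shen, Thm.~6.3.1. You prove it by calibrating with $d^j$:
\begin{itemize}
\item along a minimizer $c$ starting at $x_j$ one has $d^j(c(s))=L(c|_{[0,s]})$, hence $\langle\nabla d^j(c),\dot c\rangle=\ell(c,\dot c)$ a.e.;
\item since $\tilde h_0(c,\nabla d^j(c))=0$ and the sets $\{\tilde h_0(x,\cdot)\le 0\}$ are strictly convex, the equality case of $\langle\xi,v\rangle\le\ell(x,v)$ forces $\dot c$ to be a positive multiple of $Z(c)$;
\item ODE uniqueness for the $\Ce^1$ field $Z$ then identifies $c$ with $\gamma(\cdot;x)$.
\end{itemize}
This buys something. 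Uniqueness of minimal geodesics is in general only a local fact, and the Finsler structure is singular at $x_j$. What makes it global here is Hypothesis~\ref{hypomega}: the graph property (a)(i), $d^j\in\Ce^2$, and backward invariance (a)(iii). Your calibration uses exactly these, so you actually prove the claim, asserted after Hypothesis~\ref{hypomega} with only a brief justification, that each point of $\Omega^j$ has a unique minimal geodesic to $x_j$. The paper's route is shorter but leaves this globalization implicit.

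Two caveats. First, you cannot expect to show a priori that the piece $\sigma_2$ from $y$ to $x$ stays in $\Omega^j$. Your plan only controls $\gamma(\cdot;y)$ and a neighbourhood of $x$. You do not need it either: run the calibration as a continuation backward from $x$.
\begin{itemize}
\item The set of $s$ for which $c|_{[s,1]}$ reparametrizes an arc of $\gamma(\cdot;x)$ is closed by continuity.
\item It is open, because every point of $\gamma(\cdot;x)$ lies in $\Omega^j\setminus\{x_j\}$ by (a)(iii), with $\Omega^j$ open as implicit in (a)(ii), so the calibration applies there.
\item Since $d^j(c(s))=L(c|_{[0,s]})>0$ for $s>0$, $c$ meets $x_j$ only at $s=0$.
\end{itemize}
Hence $c|_{(0,1]}$ traces all of $\gamma(\cdot;x)$ and contains $y$. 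This also removes your separate worry about points near $x_j$.

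Second, both your argument and the paper's take for granted that a minimizing curve from $y$ to $x$ exists. Your ``passing to the limit'' needs a compactness argument. That is not automatic here, since, as the paper remarks, Finsler balls may be unbounded in the Euclidean sense.
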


\begin{proof}

By the triangle inequality the statement is true for $\leq$
instead of $<$. The idea of the proof is to show that equality only
may occur, if $y$ lies on the base integral curve of $X_{\tilde{h}_0}$ with
end point $x$.

Let $\eta : [0,1]\ra \O$ be the curve along the segment
$\{0\}\cup \{\gamma(t; x)\;|\;-\infty < t \leq 0\}$, parameterized
such that $\eta(0) = x$ and $\eta(1)= x_j$. Thus, 
by construction and Hypothesis \ref{hypomega}, $\eta$ is a minimal
geodesic between $x_j$ and $x$. In Bao-Chern-Shen \cite{bao}, Thm.
6.3.1, it is shown that minimal geodesics are unique up to
reparametrization.
Equality in \eqref{ungl} would contradict this uniqueness, because
this would mean that there are two different
curves from $x_j$ to $x$, which minimize the curve length and are thus minimizing geodesics.
\end{proof}

By a standard compactness argument, we have the following

\begin{cor}\label{d0<delta}
Let $K_1,\, K_2\subset\O^j$ be compact and assume that $K_2$ is
disjoint from $\hat{K}_1$, the compact union of all minimal
geodesics from all points
of $K_1$ to $x_j$.\\
Then there exists $\delta >0$ such that for all $x\in \hat{K}_1,\,
y\in K_2$
\[ d^j(x) \leq (1-\delta) \left( d^j(y) + d_\ell(y, x) \right) \; . \]
\end{cor}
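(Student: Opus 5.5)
The plan is to argue by contradiction and extract convergent subsequences, using Lemma \ref{d0x<} for the strict inequality at each limit point. Set
\[ F(x,y) := d^j(y) + d(y,x) - d^j(x) \geq 0, \]
which is nonnegative by the triangle inequality. The claimed inequality is equivalent to
\[ \delta\bigl(d^j(y)+d(y,x)\bigr) \leq F(x,y). \]
Since $d$ is locally Lipschitz, and in particular continuous, $F$ is continuous on $\hat K_1\times K_2$. The set $\hat K_1$ is compact by assumption, so $\hat K_1\times K_2$ is compact. It therefore suffices to prove two facts:
\begin{enumerate}
\item $F>0$ on $\hat K_1\times K_2$;
\item $d^j(y)+d(y,x)$ is bounded above on $\hat K_1\times K_2$.
\end{enumerate}
Given both, $F$ attains a positive minimum $m$, and the sum is bounded by some $S<\infty$, so $\delta := m/S$ works. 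If $S=0$ the statement is trivial.

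The bound in (ii) follows from continuity of $d$ on the compact set $\hat K_1\times K_2$. Positivity in (i) is where Lemma \ref{d0x<} enters. Fix $x\in\hat K_1$ and $y\in K_2$.

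First, $y\neq x_j$. Indeed, $x_j$ lies in $\hat K_1$ as the endpoint of every minimal geodesic, and $K_2$ is disjoint from $\hat K_1$.

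Second, $y$ does not lie on the base integral curve $\{\gamma(t;x)\,|\,t\leq 0\}$. By Hypothesis \ref{hypomega} and the remark following it, this curve is the unique minimal geodesic from $x$ to $x_j$. If $x$ lies on the minimal geodesic from some $x'\in K_1$, the tail from $x$ to $x_j$ is again minimal, so by uniqueness it coincides with $\gamma(\cdot\,;x)$. Hence the whole curve lies in $\hat K_1$, and so it avoids $K_2$.

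With both conditions verified, Lemma \ref{d0x<} applies and gives $F(x,y)>0$.

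The main obstacle I expect is the case $x=x_j$, or more generally the fact that $\hat K_1$ may not lie inside the open set where Lemma \ref{d0x<} is stated. For $x=x_j$ the statement is direct: $F(x_j,y)=d^j(y)+d(y,x_j)>0$ because $y\neq x_j$ and the Finsler distance is positive off the diagonal. For the inclusion $\hat K_1\subset\O^j$, I would use Hypothesis \ref{hypomega}(a)(iii), which says backward flow lines from $\O^j$ stay in $\O^j$. The compactness of $\hat K_1$ is given in the statement as an assumption; if needed, it can be justified from continuous dependence of the flow $F_t$ on initial data together with exponential convergence to $(x_j,0)$, which comes from \eqref{LinZ}.
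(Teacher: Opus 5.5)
Your proposal is correct and is exactly the ``standard compactness argument'' the paper invokes without giving details: Lemma \ref{d0x<} makes $d^j(y)+d(y,x)-d^j(x)$ strictly positive on the compact set $\hat K_1\times K_2$, and continuity of $d$ then yields a uniform $\delta$. Your observation that the minimal geodesic from any $x\in\hat K_1$ to $x_j$ is a tail of a geodesic already contained in $\hat K_1$, and therefore avoids $K_2$, is precisely what makes the lemma applicable; the paper leaves this step implicit.
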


\begin{proof}[Proof of Theorem \ref{exactasym}]
{\sl Step 1:}\\
In order to simplify the notation, we fix $(j,k)\in \mathcal{J}$ (see \eqref{valpha}) and set 
\begin{equation}\label{rundw} 
r:= (H_\ep^{M_j} - \mu_{j,k}) w  \; ,\qquad w:= v_{j,k}- v_{j,k}' 
\end{equation} 
for $\mu_{j,k}$ denoting the eigenvalue associated to $v_{j,k}$ (see \eqref{specHepusw}), i.e.,
\begin{equation}\label{ewg}
 H_\ep^{M_j}v_{j,k} = \mu_{j,k}v_{j,k}\; .
\end{equation}
We fix some compact set $K\subset \mathring{M}_j$ and write
\begin{align} \label{Ai}
 \id_{K_\ep} \bigl(H_\ep^{M_j} - \mu_{j,k}\bigr) v'_{j,k} &= A_1 + A_2 + A_3\, , \quad\text{where}\\
A_1 &= \sum_{\ell=1}^{m_j} c_{k,\ell} \id_{K_\ep}\bigl(H_\ep - \ep E^j_{\ell}(\ep)\bigr) \hat{v}^\ep_{j,\ell}\; , \nonumber\\
A_2 &= \sum_{\ell=1}^{m_j} c_{k,\ell} \id_{K_\ep} \bigl(\ep E^j_{\ell}(\ep) - \mu_{j,k}\bigr) \hat{v}^\ep_{j,\ell}\; ,\nonumber \\
A_3 &= \sum_{\ell=1}^{m_j} c_{k,\ell} \id_{K_\ep} \bigl[H_\ep, \id_{M_{j,\ep}}\bigr] \hat{v}^\ep_{j,\ell}\; .\nonumber
\end{align}
Theorem \ref{theoEjaj} gives $\bigl\| e^{\frac{d^j}{\ep}} A_1\bigr\|_{\ell^2(K_\ep)} = O \bigl(\ep^\infty\bigr)$.
By Proposition \ref{E10tildeE10} together with the fact  that $c_{k,\ell}=0$ if $E^j_{\ell}$ and $\mu_{j,k}$ are not asymptotically equal
and that $d^j=\tilde{d^j}$ on $M_j$ (cf. \eqref{hatvjkep}) we have the analogue result for $A_2$.
From \cite{kleinro4}, Lemma 5.1, it follows that for any $\delta>0$ and $C>0$, the commutator 
$ \bigl[H_\ep, \id_{M_{j,\ep}}\bigr]$ 
is supported in $\delta M_j:=\{x\, |\, d(x, \partial M_j) < \delta\}$ modulo $O\bigl(e^{-C/\ep}\bigr)$. Since $K\cap \delta M_j = \emptyset$ for some 
$\delta>0$, we get
\begin{equation}\label{abschAi}
 \sum_{j=1}^3 \Bigl\| e^{\frac{d^j}{\ep}} A_j\Bigr\|_{\ell^2(K_\ep)} = O \bigl(\ep^\infty\bigr)\; .
\end{equation}
Inserting \eqref{abschAi} in \eqref{Ai} gives for any compact $K\subset \mathring{M}_j$ 
by \eqref{rundw} and \eqref{ewg}
\begin{equation} \label{gewichtraufK}
 \Bigl\|e^{\frac{d^j}{\ep}} r \Bigr\|_{\ell^2(K_\ep)} = O\bigl(\ep^\infty\bigr)\; .
\end{equation}
Furthermore, by the construction of the WKB function (see \eqref{aj}) and Proposition \ref{weigMj}
\begin{equation}\label{gewichtwaufM}
 \Bigl\| e^{\frac{d^j}{\ep}} w\Bigr\|_{\ell^2(M_{j,\ep})} = O\bigl(\ep^{-N_0}\bigr)
\end{equation}
for some $N_0\in\N$. 
By \eqref{cjkfurv} we have 
\begin{equation}\label{waufM}
 \| w \|_{\ell^2(M_{j,\ep})} = O(\ep^\infty)\; .
\end{equation}
We now claim that for some $N_1\in\N$
\begin{equation}\label{gewichtraufM}
 \Bigl\| e^{\frac{d^j}{\ep}} r \Bigr\|_{\ell^2(M_{j,\ep})} = O(\ep^{-N_1})\; .
\end{equation}
By \eqref{gewichtwaufM}, \eqref{ewg} and \eqref{cjkfurv} it suffices to show
\begin{equation}\label{gewichtHvaufM}
 \Bigl\| e^{\frac{d^j}{\ep}} H_\ep^{M_j} \hat{v}^\ep_{j,k} \Bigr\|_{\ell^2(M_{j,\ep})} = O(\ep^{-N_1})\; .
\end{equation}
To prove \eqref{gewichtHvaufM}, we write (using \eqref{hatvjk})
\begin{equation}\label{3}
 \Bigl( e^{\frac{d^j}{\ep}} H_\ep^{M_j} \hat{v}^\ep_{j,k}\Bigr) (x) = 
 \sum_{\gamma\in\disk} \id_{M_{j,\ep}}(x) \id_{M_{j,\ep}}(x+\gamma) a_\gamma (x; \ep) e^{\frac{1}{\ep}(d^j(x) - d^j(x+\gamma))}\ep^{\frac{d}{4}}
b^j_k(x+\gamma)\; .    
\end{equation}
Now 
\begin{equation}\label{1}
 \sum_{\gamma\in\disk} \id_{M_{j,\ep}}(x+\gamma) \Bigl| a_\gamma (x; \ep) e^{\frac{1}{\ep}(d^j(x) - d^j(x+\gamma)}\Bigr| \leq \sum_{\gamma\in\disk} 
e^{\frac{B|\gamma|}{\ep}} |a_\gamma(x;\ep)|
\leq C
\end{equation}
uniformly for $x\in M_{j,\ep}$, using that $d^j$ is Lipschitz (\cite{kleinro},Theorem 1.6) and \eqref{agammasum}.
Furthermore, by \eqref{arep} in Theorem \ref{theoEjaj}
\begin{equation}\label{2}
 \| b^j_k\|_{\ell^2(M_{j,\ep})} \leq C \ep^{-N_2}
\end{equation}
for some $N_2\in\N$ (cf. the WKB-construction \eqref{aj}). 

Combining \eqref{1} and \eqref{2} with  \eqref{3} gives \eqref{gewichtHvaufM}.\\

{\sl Step 2:}\\
In order to define an appropriate phase function, let $\chi\in{\Ce}^\infty(\R_+,[0,1])$ such that $\chi (r)=0$ for $r\leq
\frac{1}{2}$ and $\chi (r) =1$ for $r\geq 1$. In addition we assume that
$0\leq \chi'(r) \leq \frac{2}{\log 2}$.
For $B>0$ we define
$g: \Omega^j \ra [0,1]$ by
\begin{equation}\label{defg}
g(x):= \chi\left(\frac{d^j(x)}{B\ep}\right)\; ,\qquad x\in\Omega^j
\end{equation}
and set, as in \cite{kleinro},
\begin{equation}\label{DefPhix}
\Phi (x) := d^j(x) - \frac{B\ep}{2}\log \left(\frac{B}{2}\right) -
g(x)\frac{B\ep}{2} \log \left(\frac{2d^j(x)}{B\ep}\right)\; ,\qquad x\in\Omega^j\, .
\end{equation}
To prove Theorem \ref{exactasym}, this phase function is not good enough (by the proof in \cite{kleinro}, it only gives \eqref{gewichtwaufM}).
We need a phase function $\Psi_N$, which improves the decay estimate \eqref{gewichtwaufM} with $\ell^2(M_j)$ replaced by
$\ell^2(K_\ep)$, by a factor $\ep^N$ for any $N\in\N$. Therefore, we consider for $N\in\N$ 
\begin{equation}\label{phasenfunktionN}
\Psi_N(x) := \Phi(x) + \zeta(x) \,N\ep\log\frac{1}{\ep}\; .
\end{equation}
Here $\zeta$ is a well chosen smooth function with $\zeta(x)=1$ for $x\in K$, but $\zeta(x) \leq 0$ in some small neighbourhood of $\partial M_j$.

The first property ($\zeta = 1$ on $K$) gives the required improvement (and can be implemented by using \eqref{gewichtwaufM}), 
the second ($\zeta \leq 0$ near $\partial M_j$) seems to be
unavoidable (since \eqref{gewichtraufM} cannot be improved near the boundary $\partial M_j$).

Furthermore, $\zeta$ should decrease along the outgoing integral curves of the velocity field $Z(x)=\nabla_\xi \tilde{h}_0(x, \nabla d^j(x))$ 
(see \eqref{mathP} and \eqref{velofield}) - 
otherwise it would mess up the positivity properties needed in the following Agmon-type estimates.

We shall define $\zeta$ as the solution of a certain Cauchy problem for the velocity field $Z$.

First notice that, by slightly increasing the compact set $K\subset \mathring{M}_j$, we may without loss of generality assume that $K$ has smooth 
boundary and that
\[ S:=\{ x\in\partial K \, | \, Z(x)\;\text{is tangential to}\; K\} \]
is a hypersurface. This follows from standard transversality arguments (see e.g. Hirsch \cite{hirsch}). For this $K$ we denote by
$\hat{K}$ the union of all minimal geodesics from $x_j$ to points in $K$. Let 
\[ K_1 = \hat{K} \cup ¸\bigl(d^j\bigr)^{-1}([0, \delta]) \; , \]
where $\delta>0$ shall be chosen such that $K_1\subset\mathring{M}_j$. In the following, we shall always assume that
$B\ep<\delta$. For any choice of $B$ (which we shall make in Step 4), this is true for $\ep$ sufficiently small.

Now we define $\zeta$ as the (unique) solution of the Cauchy problem
\begin{equation}\label{cauchy}
 Z\cdot \nabla \zeta = - f\quad\text{in}\; \Omega^j\; , \qquad \zeta|_{\partial K_1} = 1\, ,
\end{equation}
where $f$ is any function in $\Ce^\infty (\Omega^j)$ with $f\geq 0$ and $f(x) = 0$ for $x$ in some neighbourhood of
$K_1$. We remark that in general $\partial K_1$ is not globally smooth (at $S$ and at $\partial \hat{K}\cap (d^j)^{-1}(\delta)$), and 
\eqref{cauchy} is possibly characteristic (at $\partial \hat{K}$). To solve \eqref{cauchy}, first observe that by construction, $Z$ on $\partial K_1$ is 
either tangential
or outgoing. Thus, since $f$ vanishes in a neighbourhood of $K_1$, it follows that $\zeta=1$ in a neighbourhood of $K_1$ and, since the
integral curves of $Z$ finally reach $\partial \Omega_j$ by Hypothesis \ref{hypomega}, \eqref{cauchy} can still be solved by the method of 
characteristics in all of $\Omega_j$. Explicitly,
\begin{equation}\label{lsgcauchy}
 \zeta (x) = 1 - \int_{-\infty}^0 f(\gamma(s;x))\, ds\; ,
\end{equation}
where $\gamma(s;x)$ denotes the integral curve of $\nu(x) = Z(x) \cdot \nabla$ with $\gamma(0;x) = x$. It is clear from \eqref{lsgcauchy} that,
if $f$ is sufficiently large near $\partial M_j$, then $\zeta(x)\leq 0$ in some neighborhood of $\partial M_j$. Summing up, we have
\begin{equation}\label{eigzeta}
 \zeta\in\Ce^\infty(\Omega^j)\, , \qquad \zeta=1\;\text{near} \; K_1\quad \text{and}\quad \zeta\leq 0 \; \text{near}\; \partial M_j\; .
\end{equation}
We remark that for any $B>0$ there is $C>0$ such that for all $x\in M_j$
\begin{equation}\label{ePhied}
e^{\frac{d^j(x)}{\ep}}\frac{1}{C}\left(1+\frac{d^j(x)}{\ep}\right)^{-\frac{B}{2}}\leq
e^{\frac{\Psi_N(x)}{\ep}} \leq
e^{\frac{d^j(x)}{\ep}} \ep^{-N} C\left(1+\frac{d^j(x)}{\ep}\right)^{-\frac{B}{2}}
\end{equation}
which is a direct consequence of Lemma 3.3 in \cite{kleinro}. Furthermore, for $x\in K_1$, the lower bound holds with an additional factor 
$\ep^{-N}$ on lhs\eqref{ePhied} (by \eqref{phasenfunktionN} and \eqref{eigzeta}).\\

{\sl Step 3:}\\
Now the proof follows the proof of Theorem 1.8 in \cite{kleinro}.
We start to give estimates for $\hat{V}_\ep +
V^{\Psi_N}$ where
\begin{equation}\label{Vpsi}
 V^{\Psi_N}(x):=\sum_{\gamma\in M_j'(x)}a_\gamma(x; \ep)
\cosh \left(\tfrac{1}{\ep}(\Psi_N (x+\gamma)-\Psi_N (x))\right)\; , \quad M_j^{'}(x) := \{\gamma\in\disk\,|\, x+\gamma\in M_j\}\; .
\end{equation}
We claim that
\begin{equation}\label{VepundVPhiN}
\hat{V}_\ep(x) + V^{\Psi_N}(x) \geq \begin{cases}  - C_2\,\ep & \qquad\mbox{for}
\quad x\in M_j\cap (d^j)^{-1}([0, B\ep]) \\
\left(\frac{B}{C_0}-C_1\right)\ep &
\qquad\mbox{for}\quad  x\in M_j\cap (d^j)^{-1}([B\ep,\infty))
\end{cases}
\end{equation}
for some $C_0, C_1, C_2>0$ independent of $B$.

To prove \eqref{VepundVPhiN}, we write
\begin{equation}\label{wei5}
\hat{V}_\ep(x) + V^{\Psi_N}(x) = \bigl(\hat{V}_\ep(x) - V_0(x)\bigr) + \bigl(V^{\Psi_N}(x) +
\tilde{t}_0^{M_j}(x, \nabla\Psi_N(x))\bigr)
 + \bigl(V_0(x) -
\tilde{t}_0^{M_j}(x, \nabla\Psi
_N(x))\bigr)
\end{equation}
where, for $M'_j(x)$ defined in \eqref{Vpsi}, we set 
\[ \tilde{t}_0^{M_j}(x,\xi) := - \,\sum_{\gamma\in M_j'(x)}a_\gamma^{(0)}(x) \cosh \Bigl(\frac{\xi\cdot \gamma}{\ep}\Bigr)\; . \]
By Hypothesis \ref{hyp1}
\begin{equation}\label{Vephalbbe}
\hat{V}_\ep(x) - V_0(x) \geq - C_{11} \ep \, , \qquad x\in M_j\; .
\end{equation}
Now we claim that, for some $C_{12}<\infty$ (depending on $N$, but independent of $B$)
\begin{equation}\label{Vnulltnull}
 V_0(x) - \tilde{t}_0^{M_j}(x,\nabla\Psi_N(x)) \geq \begin{cases} 0 & \qquad\mbox{for}
\quad x\in  M_j\cap (d^j)^{-1}([0, B\ep]) \\
\Bigl( \frac{B}{C_0} - C_{12}\Bigr) \ep &
\qquad\mbox{for}\quad  x\in M_j\cap (d^j)^{-1}([B\ep,\infty))\; . \\
\end{cases}
\end{equation}
We first remark that exactly along the lines of the proof of \cite{kleinro}, Theorem 1.8, (3.22), we have 
\begin{equation}\label{Vnulltnullphi}
 V_0(x) - \tilde{t}_0^{M_j}(x,\nabla\Phi(x)) \geq \begin{cases} 0 & \qquad\mbox{for}
\quad x\in  M_j\cap (d^j)^{-1}([0, B\ep]) \\
\frac{B}{C_0} \ep &
\qquad\mbox{for}\quad  x\in M_j\cap (d^j)^{-1}([B\ep,\infty))\; . \\
\end{cases}
\end{equation}
To show \eqref{Vnulltnull}, we consider the following two cases:\\
{\sl Case 1:} $x\in K_1$\\
In this region $\zeta = 1$, thus $\nabla \Psi_N = \nabla \Phi$ and \eqref{Vnulltnull} follows from \eqref{Vnulltnullphi}. \\
{\sl Case 2:} $x\notin K_1$\\
We remark that $\tilde{t}_0^{M_j}(x,\xi)  \leq \tilde{t}_0(x,\xi)$ by Hypothesis \ref{hyp1}(a)(ii). 
Since moreover
\begin{equation}\label{nablaPsiN}
 \nabla \Psi_N (x) = \nabla \Phi (x) + \nabla \zeta (x)\, N\ep \log (\tfrac{1}{\ep}) \; , 
 \end{equation}
Taylor expansion of $\tilde{t}_0$ at the point $\nabla \Phi (x)$ gives by \eqref{Vnulltnullphi}
\begin{equation}\label{case2-4} 
\text{lhs} \eqref{Vnulltnull} \geq V_0(x) - \tilde{t}_0(x, \nabla \Phi (x) ) - I(x) \geq \frac{B}{C_0} \ep - I(x) 
 \end{equation}
where, for $\gamma_x (t) := \nabla \Phi (x) + t \nabla \zeta (x) N \ep \log(\tfrac{1}{\ep})$, the remaining term $I$ is given by 
\begin{equation}\label{I}
 I(x) = N \ep \log (\tfrac{1}{\ep}) \int_0^1 D_\xi \tilde{t}_0(x, \gamma_x (t)) \nabla\zeta (x)  \, dt\; . 
\end{equation}
In order to analyze $I$ we first remark that $|\gamma_x (t) - \nabla d^j(x)| = O( N\ep \log (\tfrac{1}{\ep}))$ and by \eqref{tildehnull}
$|D_\xi \tilde{t}_0|\leq C$ in compacta and thus
\begin{equation}\label{Dxit}
 \bigl| D_\xi \tilde{t}_0 (x, \gamma_x(t)) - D_\xi \tilde{t}_0 (x, \nabla d^j(x))\bigr| = O (N\ep \log (\tfrac{1}{\ep}))\, , \qquad t\in [0,1]\; .
\end{equation}
Inserting \eqref{Dxit} in \eqref{I} gives by the definition of $Z(x)$ as in Step 2
\begin{equation}\label{I2}
 I(x) = N \ep \log(\tfrac{1}{\ep}) Z(x)\cdot \nabla\zeta (x)   + O\Bigl( \bigl(N\ep \log (\tfrac{1}{\ep})\bigr)^2\Bigr) \leq C_{12}\ep 
\end{equation}
for some $C_{12}>0$, where for the second estimate we used that $\zeta$ solves \eqref{cauchy} with $f\geq 0$ and 
$ \bigl(N\ep \log (\tfrac{1}{\ep})\bigr)^2 = o(\ep)$. Inserting \eqref{I2} into \eqref{case2-4} gives \eqref{Vnulltnull}. 

We now claim that for some $C_{13}>0$
\begin{equation}\label{V+t}
 \bigl|V^{\Psi_N}(x) + \tilde{t}_0^{M_j}(x, \nabla\Psi_N(x))\bigr| \leq C_{13}\ep\; .
\end{equation}
Setting, for $M'_j(x)$ defined in \eqref{Vpsi},
\begin{equation}\label{VPsinull}
 V_0^{\Psi_N}(x):=\sum_{\gamma\in M_j'(x)}a^{(0)}_\gamma(x)
\cosh \left(\tfrac{1}{\ep}(\Psi_N (x+\gamma)-\Psi_N (x))\right)\; , 
\end{equation}
we write 
\begin{equation}\label{D1D2}
 V^{\Psi_N}(x) + \tilde{t}_0^{M_j}(x, \nabla\Psi_N(x)) = \bigl( V^{\Psi_N}(x) - V_0^{\Psi_N}(x)\bigr) + \bigr(V_0^{\Psi_N}(x) + 
 \tilde{t}_0^{M_j}(x, \nabla\Psi_N(x))\bigr) =: D_1(x) + D_2(x)
\end{equation}
and analyze $D_1$ and $D_2$ separately. In order to estimate $|D_1(x)|$ we first remark that by \eqref{nablaPsiN} and since $M_j$ and $M'_j(x)$ are
bounded 
\begin{equation}\label{Psi_Nminus}
 \Bigl| \frac{1}{\ep} \bigl(\Psi_N(x) - \Psi_N(x+\gamma)\bigr)\Bigr| \leq C \frac{|\gamma|}{\ep}
\end{equation}
for some $C>0$. Thus, by Hypothesis \ref{hyp1}(a), for some $C>0$ uniformly in $x\in M_j$
\begin{equation}\label{D1ab}
 \bigl|D_1(x)\bigr| \leq \sum_{\gamma\in M'_j(x)} \bigl| \ep a_\gamma^{(1)}(x) + R^{(2)}_\gamma(x; \ep)\bigr| 
\cosh \left(\tfrac{1}{\ep}\bigl(\Psi_N(x) - \Psi_N(x + \gamma)\bigr)\right) \leq C \ep \; . 
\end{equation}
In order to estimate $|D_2(x)|$ we write
\begin{equation}\label{wei3a}
\bigl|D_2(x) \bigr| \leq  
\sum_{\gamma\in M_j'(x)} \bigl| a^{(0)}_\gamma (x)\bigr| \Bigl|\cosh \bigl\{\tfrac{1}{\ep}(\Psi_N (x) -
\Psi_N (x+\gamma))\bigr\}- \cosh\bigl\{\tfrac{1}{\ep}\gamma\nabla\Psi_N (x)\bigr\}\Bigr| \, ,\qquad x\in  M_j \, .
\end{equation}
By the Mean Value Theorem for $\cosh z$ and since $|\sinh x|\leq e^{|x|}$ 
\begin{multline}\label{mittela}
 \Bigl|\cosh \bigl\{\tfrac{1}{\ep}(\Psi_N (x) -
\Psi_N (x+\gamma))\bigr\}- \cosh\bigl\{-\tfrac{1}{\ep}\gamma\nabla\Psi_N (x)\bigr\}\Bigr| \\
\leq 
\sup_{t\in [0,1]} e^{\frac{1}{\ep}| (\Psi_N(x+\gamma) - \Psi_N(x))t - \gamma\nabla\Psi_N(x) (1-t)|}
  \frac{1}{\ep}\Bigl| \Psi_N(x)-\Psi_N (x+\gamma)+\gamma\nabla\Psi_N(x) \Bigr| \; .
\end{multline}
Using \eqref{Psi_Nminus} and $|\gamma \nabla \Psi_N(x)| \leq C|\gamma|$, we get for some $D>0$ uniformly in $x\in M_j$ and $t\in [0,1]$
 \begin{equation}\label{estexpa}
e^{\frac{1}{\ep}|(\Psi_N(x+\gamma) - \Psi_N(x))t +
\gamma\nabla\Phi(x) (1-t)|}  \leq e^{\frac{D}{\ep}|\gamma|}\; .
\end{equation}
Since $\Psi_N\in\Ce^2(M_j)$, we can use second order Taylor expansion to get
\begin{equation}\label{Phi2Taya}
\frac{1}{\ep} \bigl| \Psi_N(x)-\Psi_N(x+\gamma)+\gamma\nabla\Psi_N(x)
\bigr| \leq  \frac{|\gamma|^2}{\ep} \sup_{t\in[0,1]} \bigl| D^2\Psi_N(x+t\gamma)\bigr| \leq C\frac{|\gamma|^2}{\ep}
\end{equation}
uniformly in $x\in M_j$. Inserting \eqref{estexpa} and \eqref{Phi2Taya} into \eqref{mittela} and the result into \eqref{wei3a} gives 
by Hypothesis \ref{hyp1}(a) for some $C, \tilde{C}>0$
\begin{equation}\label{dopababa}
\bigl| D_2(x)\bigr| \leq C \ep \sum_{\gamma\in M'_j(x)} \bigl|a^{(0)}_\gamma (x)\bigr| \Bigl( \frac{|\gamma|}{\ep}\Bigr)^2 
e^{\frac{D}{\ep}|\gamma|} \leq 
\tilde{C} \ep\; .
\end{equation}
\eqref{dopababa} together with \eqref{D1ab} prove \eqref{V+t}. Inserting \eqref{V+t}, \eqref{Vnulltnull} and \eqref{Vephalbbe} into 
\eqref{wei5} proves \eqref{VepundVPhiN}.\\

{\sl Step 4:}\\
In this step, we finish the proof by using Lemma \ref{HepDchi} which is proven in \cite{kleinro} (Lemma 3.2). 
First we remark that $|\nabla \Phi(x)|\leq |\nabla d^j(x)|$ for all $x\in M_j$ (this can be seen as in Step 1 in the proof of Theorem 1.8 in \cite{kleinro}).
Thus by \eqref{nablaPsiN} and the construction of $\zeta$ in Step 2, it follows that $\sup_{x\in M_j}|\nabla \Psi_N(x)|$ is independent
of $B$ and thus the same is true for the constant $C_{T,\Psi_N}$ given in \eqref{FmitF-} (and \eqref{CTPhiab}) by the discussion below 
Lemma \ref{HepDchi}. This allows us to choose $B$ in \eqref{VepundVPhiN} such that
\begin{equation}\label{wahlB} 
\Bigl(\frac{B}{C_0} - C_1\Bigr)\ep -\mu_{j,k} \geq \ep \bigl(C_{T,\Psi_N} + 1\bigr)   \; .
\end{equation}
For
\[   \O_-:= \{ x\in M_j \, |\, \hat{V}_\ep (x) + V^{\Psi_N}(x) - \mu_{j,k} < 0\}\quad\text{and}\quad \O_+ := M_j \setminus \O_- \]
and for $C_{T,\Psi_N}$ as above, we define the functions $F_{\pm} : M_j \ra [0,\infty)$ by
\begin{align}\label{Ffurpsi}
F_+(x) &:= \sqrt{(C_{T,\Psi_N}+1)\ep\id_{\{d^j<B\ep \}}(x) + (\hat{V}_\ep(x) + V^{\Psi_N}(x) -\mu_{j,k})\id_{\O_+}(x)}\\
F_-(x) &:= \sqrt{(C_{T,\Psi_N}+1)\ep\id_{\{d^j<B\ep \}}(x) + (\mu_{j,k} - \hat{V}_\ep(x) -
V^{\Psi_N}(x))\id_{\O_-}(x)}\; .\label{Fminusfurpsi}
\end{align}
Then $F_\pm$ are well defined and (using \eqref{VepundVPhiN} and \eqref{wahlB})
\begin{equation}\label{Feigenschaft2}
F:=F_+ + F_- \geq  \,\sqrt{(C_{T,\Psi_N}+1) \ep} >0\, ,\quad F_-=
O(\sqrt{\ep})\quad \text{and}\quad F_+^2 - F_-^2 = \hat{V}_\ep + V^{\Psi_N} - \mu_{j,k}\; .
\end{equation}
Furthermore, again by \eqref{VepundVPhiN} and \eqref{wahlB},
\begin{equation}\label{suppF-}
\Omega_-\subset \{x\in M_j\,|\, d^j(x)<B\ep\}\cap M_j \quad\text{and thus}\quad \supp F_- = \{d^j\leq B\ep\}\cap M_j\; .
\end{equation}
Now Lemma \ref{HepDchi} yields for $r$ and $w$ defined in \eqref{rundw} and $v = e^{\frac{\Psi_N}{\ep}}w$ the estimate 
\footnote{ Unfortunately, the analog of \eqref{vorlemma} in the proof of Theorem 1.8  in \cite{kleinro} is wrong (the weight
$e^{\frac{\Phi}{\ep}}$ is missing in the analog of the second term on lhs \eqref{vorlemma}). In order to correct this mistake, one should
redefine $F_{\pm}$ in \cite{kleinro} as in the present paper (by the analog of \eqref{Ffurpsi} and \eqref{Fminusfurpsi}) and choose
$B$ as in \eqref{wahlB}, where again the discussion below Lemma \ref{HepDchi} is crucial.}
\begin{equation}\label{vorlemma}
\left\| F e^{\frac{\Psi_N}{\ep}} w\right\|^2_{\ell^2(M_{j,\ep})} - C_{T,\Psi_N} \ep \left\|e^{\frac{\Psi_N}{\ep}}w\right\|^2_{\ell^2(M_{j,\ep})}
\leq 4 \left\| \tfrac{1}{F}e^{\frac{\Psi_N}{\ep}} r \right\|^2_{\ell^2(M_{j,\ep})}
+ 8 \left\| F_- e^{\frac{\Psi_N}{\ep}}w\right\|^2_{\ell^2(M_{j,\ep})}   \,.
\end{equation}
Since $e^{\frac{\Psi_N}{\ep}} = e^{\frac{\Phi}{\ep}} \ep^{-N}$ on $K$ by the definition of $\Psi_N$, we
have for
some $N_0\in \N$ by \eqref{Feigenschaft2} and \eqref{ePhied} (and the discussion below) 
\begin{equation}\label{mitlemma1}
\left\| F e^{\frac{\Psi_N}{\ep}}w\right\|^2_{\ell^2(M_{j,\ep})}- C_{T,\Psi}\ep \left\|e^{\frac{\Psi_N}{\ep}}w\right\|^2_{\ell^2(M_{j,\ep})}\geq 
 \ep \left\|e^{\frac{\Psi_N}{\ep}} w\right\|^2_{\ell^2(K_\ep)} \geq \frac{1}{C}
\ep^{1 + N_0 - 2N}\left\| e^{\frac{d^j}{\ep}}w \right\|_{\ell^2(K_\ep)}
\end{equation}
and by \eqref{suppF-}, \eqref{waufM} and again \eqref{ePhied}
\begin{equation}\label{mitlemma2}
\left\| F_- e^{\frac{\Psi_N}{\ep}}w\right\|^2_{\ell^2(M_{j,\ep})} = \left\| F_- e^{\frac{\Psi_N}{\ep}}
w\right\|^2_{\ell^2(\{d^j<B\ep\}\cap  M_{j,\ep})} \leq C \ep^{1-2N} \left\| w
\right\|^2_{\ell^2(\{d^j<B\ep\}\cap M_{j,\ep})} = O(1)\; ,
\end{equation}
using \eqref{waufM} in the last step.
In order to analyze the remaining term on the right hand side of \eqref{vorlemma}, we introduce a compact set $G\subset M_{j,\ep}$, which is 
chosen such that $\Psi_N(x) \leq \Phi(x)$ for $x\in M_{j,\ep}\setminus G$ (by the construction of $\Psi_N$, this inequality holds on
the neighbourhood of $\partial M_j$ where $\zeta\leq 0$). Then we write
\begin{align}\label{Fgewichtr1}  
 \left\| \tfrac{1}{F}e^{\frac{\Psi_N}{\ep}} r \right\|^2_{\ell^2(M_{j,\ep})}  &= A_1 + A_2\, , \quad \text{where} \\
A_1 = \left\| \tfrac{1}{F}e^{\frac{\Psi_N}{\ep}}r \right\|^2_{\ell^2(G)}\qquad & \text{and}\qquad 
A_2 = \left\| \tfrac{1}{F}e^{\frac{\Psi_N}{\ep}}r \right\|^2_{\ell^2(M_{j,\ep}\setminus G)}\; .\nonumber
\end{align}
We have by  \eqref{Feigenschaft2} 
together with \eqref{ePhied} for some $N_0\in\N$
\begin{equation}\label{Fgewichtr2}
 A_1 \leq C \ep^{-1 + N_0 - 2N} \left\| e^{\frac{d^j}{\ep}}r \right\|^2_{\ell^2(G)} = O(1)\; ,
\end{equation}
where in the last step we used \eqref{gewichtraufK} and the fact that $G$ is compact. For analyzing $A_2$, we use that $\Psi_N\leq \Phi$ on 
$M_{j,\ep}\setminus G$ to get, again by \eqref{Feigenschaft2} 
together with \eqref{ePhied}, for some $N_0, N_1\in\N$
\begin{equation}\label{Fgewichtr3}
 A_2 \leq C \ep^{-1 } \left\| e^{\frac{\Phi}{\ep}}r \right\|^2_{\ell^2(M_j\setminus G)} \leq C \ep^{-1+ N_0} 
\left\| e^{\frac{d^j}{\ep}}r \right\|^2_{\ell^2(M_j\setminus G)} = O(\ep^{-N_1})\; ,
\end{equation}
where we used \eqref{gewichtraufM} in the last step. Inserting \eqref{Fgewichtr2} and \eqref{Fgewichtr3} into
\eqref{Fgewichtr1} and combining the result with \eqref{mitlemma1} and \eqref{mitlemma2} gives by 
\eqref{vorlemma} that for any $N\in N$, there exists $\ep_N$ such that for all $\ep\in (0,\ep_N)$ we have  
\[ \left\| e^{\frac{d^j}{\ep}}w \right\|_{\ell^2(K_\ep)} \leq C \ep^{-1-N_0-N_1 + 2N} \, ,\]
proving Theorem \ref{exactasym}.
\end{proof}

\begin{appendix}

\section{Former results}\label{app}

We restate and adapt some results proven in \cite{kleinro} and \cite{kleinro3}. For details and proofs, we refer to these papers.
We recall that \cite{kleinro} and \cite{kleinro3} only treat the one-well situation (with $x_j=0$). Since all statements of this appendix
are essentially local, the proofs in \cite{kleinro} and \cite{kleinro3} hold unchanged in the present multi-well situation (by use of 
appropriate cut-off functions reducing the multi-well to the one-well situation).

\begin{Def}\label{hypphi}
For $j\in\mathcal{C}$, let $\Omega^j$, $C_j$ and $\hat{d}^j$ be given in Hypothesis \ref{hypomega}, \eqref{unitrans} and \eqref{djhut}.
Let $\tilde{\Omega}^j$ denote an open neighborhood of $C_j\bigl(\O^j - x_j\bigr)$ and let $\chi_j\in \Ce_0^\infty (\R^d)$ denote a cut-off function
with $\chi_j(z) = 1$ for $z\in C_j\bigl(\O^j - x_j\bigr)$ and $\supp \chi_j \subset \tilde{\Omega}^j$. Then we set 
\begin{equation}\label{djtildehut}
 \tilde{\hat{d}^j} (z) := \chi_j(z) \hat{d}^j(z) + (1-\chi_j(z))\,|z|\; .
\end{equation}

and introduce the $\ep$-dependent unitary map
\[
U_{\ep,j}  :
L^2\left(\R^d,dz\right)\rightarrow L^2\left(\R^d,
e^{-2\frac{\tilde{\hat{d}}^j(\sqrt{\ep}y)}{\ep}}dy\right)=:\mathscr{H}_j
\]
by
\begin{equation}\label{unit}
(U_{\ep,j}\,f)(y) =
\ep^{\frac{d}{4}}e^{\frac{\tilde{\hat{d}^j}(\sqrt{\ep}y)}{\ep}}f(\sqrt{\ep}y)
\end{equation}
and set, for $\hat{H}'_{\ep,j}$ as defined in \eqref{Hstrichj},
\begin{equation}\label{defGeps}
\hat{G}_{\ep, j} := \tfrac{1}{\ep}\, U_{\ep,j}\hat{H}'_{\ep,j}U^{-1}_{\ep,j}\; .
\end{equation}
\end{Def}

\begin{prop}\label{TayG}
Let $H_{\ep}$ satisfy Hypothesis \ref{hyp1} and \ref{tildevarphihyp} and let $\hat{H}'_{\ep,j},\, j\in\mathcal{C},$ be
the associated operator defined in  \eqref{Hstrichj}. 
Then the operator $\hat{G}_{\ep,j}$ defined in \eqref{defGeps}
has an expansion
\begin{equation}\label{Gdach}
 \hat{G}_{\ep,j} = \sum_{\tfrac{\N}{2}\ni k\leq N-\frac{1}{2}} \ep^k G^j_k + R^j_N\, ,
 \qquad N\in\hNnull\; .
\end{equation}
Here
\begin{equation}\label{G_kbeiGeps}
G^j_k  =  \left( p^j_{k} + \sum_{|\alpha|=1}^{2k+2} p^j_{k, \alpha}
\partial^\alpha\right)   
\end{equation}
where $p^j_k$ is a polynomial of degree $2k$ which is even (odd) with respect to $y\mapsto -y$
if $2k$ is even (odd), and
$p^j_{k, \alpha}$ is a polynomial of degree $2k + 2 - |\alpha|$ which is even (odd) if $2k -
|\alpha|$ is even (odd).
Moreover, for $\Omega^j$ satisfying Hypothesis \ref{hypomega}, let $\zeta_j\in\Ce_0^\infty(\Omega^j)$ denote a cut-off function and set 
$\zeta_{j,\ep}(y) := \zeta_j (\sqrt{\ep}y)$. Then  there exist constants $C_{N}$ and $\ep_j>0$ such that
for any $\ep\in (0,\ep_j]$ and for any $u,v\in \C[y] \subset \mathscr{H}_{j}$
\begin{equation}\label{restTayG}
 \left| \skpH{u}{\zeta_{j,\ep} R^j_N v} \right| \leq C_{N} \ep^N \sum_{\natop{\alpha\in\N^d}
 {|\alpha|\leq 4N+4}}
 \bigl\|(\,\cdot\, )^\alpha u\bigr\|_{\mathscr{H}_j}
\sum_{\natop{\beta\in\N^d}{|\beta|\leq N}} \bigl\||\, \cdot \, |^{2N+2}\partial^\beta
v\bigr\|_{\mathscr{H}_{j}}\; .
\end{equation}
\end{prop}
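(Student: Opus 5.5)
The statement to prove is Proposition \ref{TayG}: the conjugated, rescaled operator $\hat{G}_{\ep,j}$ has an expansion in half-integer powers of $\ep$ with polynomial-coefficient differential operators $G^j_k$, and the remainder obeys \eqref{restTayG}. I would follow the one-well argument of \cite{kleinro3} and localize with $\zeta_j$; all objects are then local near $x_j$, and $\tilde{\hat{d}}^j=\hat{d}^j$ on $C_j(\O^j-x_j)$.

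\textbf{Step 1: write the operator explicitly.} Substituting $z=\sqrt{\ep}y$ and $\gamma=\ep\eta$, the kinetic part of $\hat{G}_{\ep,j}$ acting on $v(y)$ is
\[
\tfrac{1}{\ep}\sum_{\eta} a'_{\ep\eta}(\sqrt{\ep}y;\ep)\,
e^{\frac1\ep(\hat{d}^j(\sqrt{\ep}y)-\hat{d}^j(\sqrt{\ep}y+\ep\eta))}\,
v(y+\sqrt{\ep}\eta),
\]
where $a'$ are the transformed coefficients. The potential part is $\ep^{-1}\hat{V}'_\ep(\sqrt{\ep}y)$.

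\textbf{Step 2: Taylor expand in $\sqrt{\ep}$.} I would expand each factor in powers of $\sqrt{\ep}$:
\begin{itemize}
\item the coefficients $a^{(m)}_{\ep\eta}(\sqrt{\ep}y)$ in $y$;
\item the phase difference as in \eqref{entphi}, using \eqref{djentw} and the estimates \eqref{tay8};
\item the translate $v(y+\sqrt{\ep}\eta)$ by Taylor's formula.
\end{itemize}
The leading order $\ep^{-1}$ cancels by the eikonal equation \eqref{eikonal}. The order $\ep^{-1/2}$ vanishes by parity, since $a^{(0)}_\gamma=a^{(0)}_{-\gamma}$. The $\ep^0$ term is the harmonic-oscillator-type operator $G^j_0$.

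At order $\ep^k$, one collects monomials of the form $y^\beta\,\eta^\mu\,\partial^\alpha$. Summing over $\eta$ against $a^{(m)}_{\ep\eta}e^{-\eta\cdot\nabla\hat d^j}$ converges by \eqref{agammasum}. The polynomial degrees come from simple counting: each factor $\sqrt{\ep}$ is paired with either one power of $y$ or one derivative. This bookkeeping gives $\deg p^j_k=2k$ and $\deg p^j_{k,\alpha}=2k+2-|\alpha|$. The same counting yields the parity statements, because a monomial with total power $\ep^{k}$ has degree plus derivative order congruent to $2k$ modulo $2$.

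\textbf{Step 3: the remainder estimate (main obstacle).} The difficulty is that the Taylor remainders are evaluated at shifted points $\sqrt{\ep}y+t\ep\eta$, and $|\eta|$ can be as large as $\ep^{-1}$. I would handle this as follows.
\begin{itemize}
\item Write the remainder $R^j_N$ via integral Taylor remainders of order $2N+1$ in $\sqrt{\ep}$. Each such term carries explicit factors $\ep^{N}|\eta|^{M}|y|^{M'}$, with derivatives of $v$ evaluated at $y+t\sqrt{\ep}\eta$.
\item Use the cut-off $\zeta_{j,\ep}$ to restrict to $\sqrt{\ep}y\in\supp\zeta_j$, where all derivatives of $\hat{d}^j$ are bounded.
\item Control the exponential factor $e^{\frac1\ep(\hat d^j(z)-\hat d^j(z+\ep\eta))}\le e^{C|\eta|}$ by the super-exponential decay \eqref{agammasum}.
\item Pair with $u$ in $\mathscr{H}_j$. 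Moving the shift from $v$ onto the weight costs a factor $e^{2(\hat d^j(\sqrt\ep y+t\ep\eta)-\hat d^j(\sqrt\ep y))/\ep}\le e^{C|\eta|}$, which is again absorbed by \eqref{agammasum}.
\item Apply Cauchy–Schwarz. The powers $|y|^{M'}$ and the terms coming from $|y+t\sqrt\ep\eta|$ give moments up to $|\alpha|\le 4N+4$ on $u$, and at most $N$ derivatives together with the weight $|\cdot|^{2N+2}$ on $v$.
\end{itemize}
I expect the careful accounting of these weights, uniformly in $\eta$, to be the hardest part. Everything else is a routine transcription of the one-well proof in \cite{kleinro3}.
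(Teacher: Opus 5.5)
Your route is the paper's route. The paper gives no separate proof of this proposition: it restates the one-well result of \cite{kleinro3} and observes that the argument is local, so cut-offs like $\zeta_j$ carry it over to each well, which is what your Steps 1--3 sketch.

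One correction to Step 2. Pairing each factor $\sqrt{\ep}$ with a power of $y$ or a derivative only gives $\deg p^j_k\le 2k+2$ for the multiplicative part. The drop to $2k$ is not bookkeeping. It comes from the identity $\ep^{-1}\tilde{h}_{0,j}\bigl(\sqrt{\ep}y,\nabla\hat{d}^j(\sqrt{\ep}y)\bigr)=0$, i.e.\ from \eqref{eikonal} used at every order of the $\sqrt{\ep}$-expansion, not only at order $\ep^{-1}$. At fixed $y$ the $\ep^{-1}$ term already vanishes by $\sum_\gamma a^{(0)}_\gamma=0$ alone. At order $\ep^0$, for example, the eikonal equation is what cancels the quadratic part of the rescaled potential against the term quadratic in $\nabla\hat{d}^j$ coming from the kinetic part.

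The step that does not go through is the derivative count at the end of Step 3.
\begin{itemize}
\item To gain $\ep^N$ against the prefactor $\ep^{-1}$ you need Taylor remainders of order $2N+2$ in $\sqrt{\ep}$, not $2N+1$.
\item Each derivative of $v(y+\sqrt{\ep}\eta)$ brings only one factor $\sqrt{\ep}$, so your remainder terms carry up to $2N+2$ derivatives of $v$.
\item Integrating by parts in $\mathscr{H}_j$ merely moves these derivatives onto $u$, and the right side of \eqref{restTayG} controls $u$ only through moments.
\end{itemize}
So nothing in your Cauchy--Schwarz step produces $|\beta|\le N$.

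In fact the printed count cannot hold. For $N=0$ the sum in \eqref{Gdach} is empty, so $R^j_0=\hat{G}_{\ep,j}$, which acts like $-\Delta_y$ on frequencies $1\ll k\ll\ep^{-1/2}$. Take $u=v$ to be a polynomial approximation of $\phi(y)\cos(ky_1)$, with $\phi$ a fixed bump supported near $0$. Then the left side of \eqref{restTayG} is of order $k^2$, while the right side stays bounded.

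What your argument does prove is \eqref{restTayG} with $\sum_{|\beta|\le 2N+2}$ in place of $\sum_{|\beta|\le N}$. That suffices for applying the estimate to the fixed polynomials of the formal construction, and you should state it in that form rather than claim the printed count.
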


\begin{rem}\label{remG_k}
\ben
\item As a map on $\C [y]$, $G^j_k\, , \; k\in\frac{\N}{2}$,
raises the degree of a polynomial by $2k$ and preserves (or
changes) the parity with respect to $y\mapsto -y$ according to the
sign $(-1)^{2k}$. This follows at once from the degree and parity
of the polynomials $p^j_\ell$ in the representation of $G^j_k$.\\
\item The term of order zero more precisely is given by
\begin{equation}\label{G_0}
G^j_0 = \Delta_y \tilde{\hat{d}^j}_0(y) + \sum_{\nu =1}^d
(2(\partial_{y_\nu}\tilde{\hat{d}^j}_0(y))\partial_{y_\nu}) -
\Delta_y + V_1(x_j) + t_1(x_j,0)
\end{equation}
where $\tilde{\hat{d}^j}_0$ is defined similar to \eqref{djtildehut} for $\hat{d^j}_0$ given in \eqref{djentw}. 
The eigenfunctions $h_\alpha$ of $G^j_0$ are products of Hermite polynomials $h_{\alpha_\nu}\in \R[y_\nu]$. Since 
$h_k(-y_\nu)= (-1)^{k }h_k(y_\nu)$ for any $k\in\N, \nu=1, \ldots d$, it follows that $h_{\alpha}$ is even
(respectively odd), if $|\alpha|$ is even (resp. odd).  
\een
\end{rem}

The space $\mathcal{A}_j (\R^d):= \C[[z]][[\ep^{1/2}]]$ of formal Laurent series in $\ep^{1/2}$ with final principal part and  coefficients in the space
$\C[[z]]$ of Laurent series in $z\in \R^d$ with finite principal part is a vector space over the field
$ \mathcal{K} := \C[[\ep^{1/2}]]$ of Laurent series in $\ep^{1/2}$ with final principal part.
There exists a non degenerate sesquilinear form 
$\langle \,\cdot\,,\, \cdot\,\rangle_{\mathcal{A}_j}: \mathcal{A}_j\times \mathcal{A}_j \rightarrow \mathcal{K}$ 
which formally is given by the asymptotic expansion at $z=0$ of 
\[ \langle p,q\rangle_{\mathcal{A}_j} = \ep^{-\frac{d}{2}} \int_{\R^d} \overline{p}(z;\ep) q(z;\ep) e^{-2\frac{d^j(z)}{\ep}}\, dz \]
(see \cite{kleinro3})
and we define
\begin{equation}\label{HaufA}
\left.e^{\frac{\tilde{\hat{d}^j} }{\ep}}\hat{H}'_{\ep,j}e^{-\frac{\tilde{\hat{d}^j}}{\ep}}
\right|_{\mathcal{A}_j}=: H'_{\ep,j, \mathcal{A}}\, ,
\end{equation}
which is symmetric on $\mathcal{A}_j$.

\begin{theo}\label{theo45}
Let $H_{\ep}$ satisfy Hypothesis \ref{hyp1} and \ref{tildevarphihyp} and let $\hat{H}'_{\ep,j},\, j\in \mathcal{C},$ be
the associated operator defined in  \eqref{Hstrichj}. 
Let $\tilde{\hat{d}^j}$ be as given in \eqref{djtildehut} and 
let $E^j$ be an eigenvalue with multiplicity $m_j$ of the harmonic
approximation $G^j_0$ of $\hat{G}_{\ep,j}$ given in \eqref{G_0}.
\ben
\item Then the operator $H'_{\ep,j, \mathcal{A}}$ defined in \eqref{HaufA} has
an orthonormal system (with respect to  $\langle \,\cdot\, ,\, \cdot\, \rangle_{\mathcal{A}_j}$) of $m_j$ eigenfunctions 
$\hat{b}^j_{k},\, k= 1, \ldots m_j,$ in $\mathcal{A}_j$ of
the form 
\begin{equation}\label{aj}
\hat{b}^j_k(z; \ep) := \sum_{\natop{\ell\in\hZ}{\ell\geq -N}}
\ep^\ell \hat{b}^j_{k\ell}(z)
\end{equation}
for some $N\in\Z$, where $\hat{b}^j_{k\ell}(z)$ denote formal power series in $z\in\R^d$
and the lowest order
monomial in $\hat{b}^j_{k\ell}\in\C[[z]]$ is of degree $\max\{-2\ell,0\}$.\\
The associated eigenvalues are
\begin{equation}\label{theoev}
\ep E^j_k(\ep) = \ep \left(E^j + \sum_{\ell\in\hN}\ep^\ell E^j_{k\ell}\right)\, .
\end{equation}
\item We introduce $I_{E^j}:= \left\{ \left. \alpha\in\N^d \,\right|\,
G^j_0 h_\alpha = E^j h_\alpha \right\} =: \{\alpha^1, \ldots ,
\alpha^{m_j}\}$ numbering the $m_j$ Hermite polynomials with eigenvalue $E^j$
for $G^j_0$. 

If $|\alpha|$ is even (resp. odd) for all $\alpha\in I_{E^j}$,
then all half integer (resp. integer) terms in the expansion
\eqref{aj} vanish.
\een
\end{theo}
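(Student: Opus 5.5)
The plan is to work in the rescaled picture of Definition \ref{hypphi} and do formal Rayleigh--Schr\"odinger perturbation theory for $\hat{G}_{\ep,j}$ on the polynomial space $\C[y]\subset\mathscr{H}_j$. Only afterwards will we transform back to $\mathcal{A}_j$ via $z=\sqrt{\ep}\,y$.

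\emph{Setup in the $y$-picture.} By Proposition \ref{TayG} and Remark \ref{remG_k}, we have formally $\hat{G}_{\ep,j}\sim\sum_{k\in\N/2}\ep^k G^j_k$ on $\C[y]$. Each $G^j_k$ raises the degree by at most $2k$ and multiplies parity by $(-1)^{2k}$. The leading operator $G^j_0$ in \eqref{G_0} is, after conjugation with the Gaussian weight, a harmonic oscillator. It is symmetric with respect to the scalar product of $\mathscr{H}_j$, which on polynomials reduces to a Gaussian one; it has the Hermite products $h_\alpha$ as eigenfunctions, and it preserves the filtration $\C[y]_{\le n}$ of polynomials of degree $\le n$. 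Let $\Pi$ be the orthogonal projection onto $\Span\{h_\alpha\,|\,\alpha\in I_{E^j}\}$, and let $R_0:=(G^j_0-E^j)^{-1}(1-\Pi)$ be the reduced resolvent. Then $R_0$ maps $\C[y]_{\le n}$ into itself, since $G^j_0$ is triangular with respect to the Hermite basis. Formal symmetry of each $G^j_k$ with respect to $\langle\cdot,\cdot\rangle_{\mathscr{H}_j}$ on $\C[y]$ I will take from the symmetry of $H_\ep$ (Hypothesis \ref{hyp1}(a)(iii)) together with the unitarity of $U_{\ep,j}$. The error term \eqref{restTayG} shows that the expansion is exact order by order on polynomials.

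\emph{Degenerate perturbation step.} Next I will use a Feshbach/Schur-complement reduction, in the spirit of Kato's reduction. I will look for $b=\sum_k\ep^k b_k$ and $E=\sum_k\ep^k E_k$ with $E_0=E^j$, and write $b=\Pi b+(1-\Pi)b$. Solving the $(1-\Pi)$-component by iterating $R_0$ against $\sum_{k>0}\ep^kG^j_k$ gives a formal power series in $\ep^{1/2}$ whose coefficients are polynomials of degree $\le n_0+2k$ at order $\ep^k$, where $n_0=|\alpha|$ for $\alpha\in I_{E^j}$. The problem then reduces to an $m_j\times m_j$ effective matrix $\mathcal{M}(\ep,E)$ on $\operatorname{Ran}\Pi$. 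This matrix is formally symmetric, with entries that are formal series in $\ep^{1/2}$ and $E-E^j$. I will solve $\det(\mathcal{M}(\ep,E)-E)=0$ order by order.

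This is the step I expect to be the main obstacle. Splitting the degenerate eigenvalue may a priori produce Puiseux series in fractional powers of $\ep^{1/2}$. My plan is to invoke a formal analogue of Rellich's theorem: a real-symmetric matrix family that is analytic (here, formal) in a single real parameter has eigenvalues and an orthonormal eigenbasis that are again series in the same parameter. Concretely, I will diagonalize successively: at each order, restrict to the eigenspaces of the leading-order correction that are still degenerate, and iterate; symmetry guarantees that no fractional exponents arise. The result is $m_j$ eigenvalues $E^j_k(\ep)$ of the form \eqref{theoev} and $m_j$ eigenvectors $b^{(k)}$, orthonormal in $\mathscr{H}_j$ modulo the normalization, which I fix order by order by Gram--Schmidt over $\mathcal{K}$.

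\emph{Back to $\mathcal{A}_j$ and the parity claim.} Setting $y=z/\sqrt{\ep}$, a monomial $\ep^k y^\beta$ with $|\beta|\le n_0+2k$ becomes $\ep^{k-|\beta|/2}z^\beta$. Thus the $\ep$-exponents are bounded below by $-n_0/2$, which gives a finite $N$. Moreover, the coefficient of $\ep^\ell$ only contains monomials with $|\beta|\ge\max\{-2\ell,0\}$, as claimed in \eqref{aj}. Orthonormality transfers because $\langle\cdot,\cdot\rangle_{\mathcal{A}_j}$ is by definition the asymptotic expansion of the $\mathscr{H}_j$ scalar product after this substitution; this uses a Laplace-method argument together with $\tilde{\hat{d}}^j=\hat{d}^j$ near $0$ and \eqref{djentw}. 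This proves (a).

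For (b), suppose all $|\alpha|$, $\alpha\in I_{E^j}$, have the same parity $p$. By induction, using the parity rule for $G^j_k$ and the fact that $R_0$ and $\Pi$ preserve parity, the coefficient $b_k$ has parity $(-1)^{p+2k}$. At half-integer orders the entries of $\mathcal{M}$ pair functions of opposite parity, so they vanish under the even Gaussian weight. Hence $E^j_{k\ell}=0$ for $\ell\notin\N$, and the inductive parity claim is consistent. Finally, a surviving monomial $\ep^k y^\beta$ has $|\beta|\equiv p+2k \pmod 2$, so $2\ell=2k-|\beta|\equiv p \pmod 2$. For $p$ even only integer $\ell$ occur, and for $p$ odd only half-integer $\ell$ occur, which is exactly the statement of (b).
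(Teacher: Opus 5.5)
The paper does not prove Theorem~\ref{theo45} itself. It quotes the result from \cite{kleinro3} and carries it over to the multi-well case by cut-offs. Your overall route is the natural one: formal perturbation theory for $G(\ep)=\sum_k\ep^kG^j_k$ on $\C[y]$, the bound $\deg b_k\le n_0+2k$ from Remark~\ref{remG_k}, and then $z=\sqrt{\ep}\,y$. Take $n_0:=\max\{|\alpha|:\alpha\in I_{E^j}\}$, since the $|\alpha|$ need not coincide, as part~(b) presupposes. Your bookkeeping of $N$ and of the lowest degree $\max\{-2\ell,0\}$ is fine. The gap is in the step you yourself single out, the exclusion of fractional exponents: the symmetry you invoke there is not the one that is available.

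\emph{What goes wrong.} The weight of $\mathscr H_j$ is $e^{-2\tilde{\hat d}^j(\sqrt{\ep}y)/\ep}=e^{-2\hat d^j_0(y)}\bigl(1+\ep^{1/2}w_{1/2}(y)+\ep\, w_1(y)+\dots\bigr)$. The polynomials $w_k$ come from the cubic and higher Taylor terms of $\hat d^j$ at $0$ (cf.~\eqref{djentw}). So the $\mathscr H_j$-product on $\C[y]$ is not Gaussian; only its leading part $\langle\cdot,\cdot\rangle_0$ is.
\begin{itemize}
\item Symmetry of $\widehat H_\ep$ and unitarity of $U_{\ep,j}$ give symmetry of the whole series $G(\ep)$ with respect to the formal product $\langle u,v\rangle_\ep=\sum_k\ep^k\langle u,w_kv\rangle_0$. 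This is the $y$-picture of the symmetry of $H'_{\ep,j,\mathcal A}$ for $\langle\cdot,\cdot\rangle_{\mathcal A_j}$.
\item It does not give symmetry of each $G^j_k$. With ${}^*$ the $\langle\cdot,\cdot\rangle_0$-adjoint, already $G^j_{1/2}-(G^j_{1/2})^*=[G^j_0,w_{1/2}]$, which is nonzero in general.
\item Consequently your Schur complement $\mathcal M(\ep,E)$, built with the $\langle\cdot,\cdot\rangle_0$-orthogonal $\Pi$, is not symmetric. For instance its $\ep^1$-coefficient satisfies $\mathcal M_1-\mathcal M_1^*=[\Pi G^j_{1/2}\Pi,\Pi w_{1/2}\Pi]$.
\item $\mathcal M$ also depends on $E$, so a one-parameter Rellich theorem does not apply to it. As written, nothing rules out Puiseux solutions of $\det(\mathcal M(\ep,E)-E)=0$.
\end{itemize}

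\emph{A repair.} Use the formal Riesz projection $P(\ep)=\frac{1}{2\pi i}\oint_{|\zeta-E^j|=\delta}(\zeta-G(\ep))^{-1}\,d\zeta$.
\begin{itemize}
\item Its $\ep^k$-coefficient is a finite sum of products of $G^j_{k_i}$'s and $(\zeta-G^j_0)^{-1}$, so it raises degree by at most $2k$ and your degree bound survives.
\item $P(\ep)$ is $\langle\cdot,\cdot\rangle_\ep$-self-adjoint.
\item Orthonormalize $P(\ep)h_\alpha$, $\alpha\in I_{E^j}$, for $\langle\cdot,\cdot\rangle_\ep$; the Gram matrix is $\id+O(\ep^{1/2})$.
\item Then $G(\ep)$ on $\operatorname{Ran}P(\ep)$ is an $E$-independent real symmetric matrix $E^j+\sum_{k\ge 1/2}\ep^kM_k$. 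Kato's reduction process, a formal Rellich theorem, now gives eigenvalues and an orthonormal eigenbasis over $\R[[\ep^{1/2}]]$.
\end{itemize}
For (b) the same framework works. However, the vanishing of the half-integer entries must come from invariance under $(y,\ep^{1/2})\mapsto(-y,-\ep^{1/2})$, not from evenness of a Gaussian weight. The weight respects this invariance because $w_k$ has parity $(-1)^{2k}$. This yields $M(\ep)\in\R[[\ep]]$, and you then apply Rellich in the parameter $\ep$.
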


In order to prove Proposition \ref{exactasym}, we need the following (adapted and reformulated version of a) Lemma proven 
(in a more general setting) in \cite{kleinro} (Lemma 3.2). For the Schr\"odinger operator, this type of result is due to 
Helffer-Sj\"ostrand \cite{hesjo}, see also Dimassi-Sj\"ostrand \cite{dima}.

\begin{Lem}\label{HepDchi}
Assume Hypothesis \ref{hyp1} and for $j\in\mathcal{C}$ let $M_j$ satisfy Hypotheses \ref{hypIMj} and \ref{hypomega}. Let
$\Phi: M_j \ra \R$ be Lipschitz. For $E\geq 0$ fixed, let $F_\pm : M_j \rightarrow [0,\infty)$ be a pair of functions such that
$F(x) := F_+(x) + F_-(x) > 0 $ and
\begin{equation}\label{F+F-bedingung}
F_+^2(x) - F_-^2(x) = \hat{V}_\ep(x) + V^\Phi (x) - E\; , \qquad x\in M_j 
\end{equation}
where 
\begin{equation}\label{Vphi}
 V^{\Phi}(x):=\sum_{\gamma\in M_j'(x)}a_\gamma(x; \ep)
\cosh \left(\tfrac{1}{\ep}(\Phi (x+\gamma)-\Phi (x))\right)\; , \quad M_j^{'}(x) := \bigl\{\gamma\in\disk\,|\, x+\gamma\in M_j\bigr\}\; .
\end{equation}
Then for $v\in\ell^2(M_{j,\ep})$ real-valued, we have for some $C_{T,\Phi}>0$ uniformly with respect to $v$
\begin{equation}\label{FmitF-}
\| F v \|^2_{\ell^2(M_{j,\ep})} \leq 4 \left\| \tfrac{1}{F}\left(e^{\frac{\Phi}{\ep}}(H_\ep - E)
e^{-\frac{\Phi}{\ep}}\right)v\right\|^2_{\ell^2(M_{j,\ep})} + 8 \|F_-v\|^2_{\ell^2(M_{j,\ep})} + C_{T,\Phi} \ep \|v\|^2_{\ell^2(M_{j,\ep})} \, .
\end{equation}
\end{Lem}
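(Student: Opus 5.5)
The plan is the standard Agmon-type energy identity for the discrete operator conjugated by $e^{\Phi/\ep}$, exactly as in \cite{kleinro}, Lemma 3.2, and Helffer--Sj\"ostrand \cite{hesjo}. First, for real $v\in\ell^2(M_{j,\ep})$, we set $u:=e^{-\Phi/\ep}v$ and compute the quadratic form
\[
\Bigl\langle v\, ,\, e^{\frac{\Phi}{\ep}}(H_\ep-E)e^{-\frac{\Phi}{\ep}}v\Bigr\rangle_{\ell^2(M_{j,\ep})}
=\sum_{x}\sum_{\gamma\in M_j'(x)}a_\gamma(x;\ep)\,e^{\frac{1}{\ep}(\Phi(x)-\Phi(x+\gamma))}v(x)v(x+\gamma)+\sum_x(\hat V_\ep(x)-E)v(x)^2 .
\]

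Next we symmetrize the kinetic double sum using Hypothesis \ref{hyp1}(a)(iii), $a_\gamma(x)=a_{-\gamma}(x+\gamma)$, by substituting $x\mapsto x+\gamma$, $\gamma\mapsto-\gamma$. Averaging the two expressions replaces $e^{(\Phi(x)-\Phi(x+\gamma))/\ep}$ by $\cosh\bigl((\Phi(x+\gamma)-\Phi(x))/\ep\bigr)$, since the odd part cancels. We then write $v(x)v(x+\gamma)=\tfrac12\bigl(v(x)^2+v(x+\gamma)^2\bigr)-\tfrac12\bigl(v(x)-v(x+\gamma)\bigr)^2$. The diagonal part produces exactly $\sum_x V^\Phi(x)v(x)^2$, again after symmetrizing. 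The remainder is $-\tfrac12\sum a_\gamma\cosh(\cdot)\bigl(v(x)-v(x+\gamma)\bigr)^2$.

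The coefficients with $\gamma\neq0$ split into the part $a^{(0)}_\gamma\le 0$, which contributes a nonnegative term and can be dropped, and the $O(\ep)$ part $a_\gamma-a_\gamma^{(0)}$. Since $\Phi$ is Lipschitz, $\cosh\le e^{L|\gamma|/\ep}$, and the exponential decay \eqref{agammasum} bounds the latter contribution by $C_{T,\Phi}\ep\|v\|^2$, much as in the derivation of \eqref{Tvonunten}. This gives
\[
\bigl\langle v, e^{\Phi/\ep}(H_\ep-E)e^{-\Phi/\ep}v\bigr\rangle \ge \bigl\langle v,(\hat V_\ep+V^\Phi-E)v\bigr\rangle - C_{T,\Phi}\ep\|v\|^2
=\|F_+v\|^2-\|F_-v\|^2-C_{T,\Phi}\ep\|v\|^2 ,
\]
where the last equality uses \eqref{F+F-bedingung}. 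The constant $C_{T,\Phi}$ depends only on the Lipschitz constant of $\Phi$, which is the point used below Lemma \ref{HepDchi}.

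To close the argument, we bound the left side by Cauchy--Schwarz as $\|\tfrac1F P v\|\,\|Fv\|\le \|\tfrac1F Pv\|^2+\tfrac14\|Fv\|^2$, writing $P$ for the conjugated operator. We then use $\|Fv\|^2\le 2\|F_+v\|^2+2\|F_-v\|^2$, so that $\|F_+v\|^2\ge \tfrac12\|Fv\|^2-\|F_-v\|^2$. Rearranging gives $\tfrac14\|Fv\|^2\le \|\tfrac1F Pv\|^2+2\|F_-v\|^2+C\ep\|v\|^2$, and multiplying by $4$ yields \eqref{FmitF-} after renaming the constant.

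The main obstacle I expect is the symmetrization step on the truncated lattice $M_{j,\ep}$. The substitution must respect the restriction $x,x+\gamma\in M_j$. It does, because the condition is symmetric in the pair $(x,x+\gamma)$, and that symmetry is why $V^\Phi$ is defined with $M_j'(x)$. The other point needing care is ensuring that the non-sign-definite $O(\ep)$ coefficients are controlled uniformly in terms of the Lipschitz constant only.
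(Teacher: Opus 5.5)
Your proof is correct and is essentially the argument the paper relies on, namely Lemma 3.2 of \cite{kleinro}. The symmetrized identity yields $V^\Phi$ plus the remainder $-\tfrac12\sum a_\gamma\cosh(\cdot)\,(v(x)-v(x+\gamma))^2$, which is exactly the quantity in \eqref{CTPhiab}; you bound it below using Hypothesis \ref{hyp1}(a)(ii),(iv) and then do the same Cauchy--Schwarz bookkeeping that produces the constants $4$ and $8$. One small imprecision: the $O(\ep)$ bound with exponential weight for $a_\gamma-a_\gamma^{(0)}=R^{(1)}_\gamma$ comes from the remainder estimate in (a)(iv) with $N=1$ (taking $c$ larger than the Lipschitz constant of $\Phi$), not from \eqref{agammasum}, which carries no factor $\ep$.
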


It follows at once from the proof of Lemma \ref{HepDchi} in \cite{kleinro}, that the constant $C_{T,\Phi}$ has to fulfill the estimate
\begin{equation}\label{CTPhiab}
-\frac{1}{2}\sum_{x,x+\gamma\in\Sigma}a_\gamma(x, \ep) \cosh \left(\frac{1}{\ep}(\Phi (x)-
\Phi (x+\gamma))\right) (v(x)-v(x+\gamma))^2\geq -C_{T,\Phi} \ep \|v\|^2\; .
\end{equation}
Since $\Phi$ is Lipschitz and $M_j$ is compact, we have $\Phi(x) - \Phi (x+\gamma) \leq \sup_{y\in M_j} |\nabla \Phi (y)|$. It therefore  
follows from Hypothesis \ref{hyp1},(a), (ii),(iv), that such a constant exists and that it 
only depends on $T_\ep$ (with lower bound $C_T$ given in \eqref{Tvonunten}) and $\sup_{y\in M_j} |D\Phi(y)|$).  

For the sake of the reader, we recall Proposition 2.5 from Helffer-Sj{\"o}strand \cite{hesjo}.

\begin{prop}\label{dEFA}
Let $A$ be a self adjoint operator in a Hilbert space $\Hi$ and
$I\subseteq\R$ denote a compact interval. Let
$\mu_1,\ldots,\mu_n\in I$ and $\psi_1,\ldots \psi_n\in\Hi$ be
linearly independent satisfying
\[ A\psi_j = \mu_j\psi_j + r_j  \, ,\]
where $\| r_j\|\leq \delta$.
Let $a>0$ and assume that $\spec (A) \cap
\left((I+B(0,2a))\setminus I\right) = \emptyset$. Denoting by $\E$
the space spanned by $\psi_1,\ldots \psi_n$ and by $\F$ the
eigenspace of $A$ associated with $\spec (A)\cap I$, then we have
\begin{equation}\label{distEFab}
\vec{\dist}(\E,\F) \leq  \frac{ \sqrt{n}\delta}{a\sqrt{\lambda_\Psi^{\min}}}\; ,
\end{equation}
where $\lambda_\Psi^{\min}$ denotes the minimal eigenvalue of the
Gram-matrix $\Psi=\left(\langle{\psi_j},{\psi_k}\rangle_{\Hi}\right)$.
\end{prop}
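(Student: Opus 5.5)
The plan is to reduce the statement to a resolvent-type bound on the spectral complement of $\F$, applied to each $\psi_j$ separately, and then to pass to arbitrary elements of $\E$ using the Gram matrix $\Psi$. Let $\Pi_\F$ denote the spectral projection of $A$ onto $\spec(A)\cap I$, so that $\F=\operatorname{Ran}\Pi_\F$. By definition,
\[
\vec{\dist}(\E,\F)=\|\Pi_\E-\Pi_\F\Pi_\E\|=\sup\bigl\{\|(\id-\Pi_\F)u\|\,:\,u\in\E,\ \|u\|=1\bigr\}.
\]
So it suffices to bound $\|(\id-\Pi_\F)u\|$ uniformly for unit vectors $u\in\E$.

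\emph{Step 1 (single vectors).} Since $\Pi_\F$ commutes with $A$, applying $\id-\Pi_\F$ to $A\psi_j=\mu_j\psi_j+r_j$ gives
\[
(A-\mu_j)(\id-\Pi_\F)\psi_j=(\id-\Pi_\F)r_j .
\]
The restriction of $A$ to $\operatorname{Ran}(\id-\Pi_\F)$ has spectrum contained in $\spec(A)\setminus I$. By hypothesis this set does not meet $(I+B(0,2a))\setminus I$, so it has distance at least $2a$ from every point of $I$, and in particular from $\mu_j\in I$. By the spectral theorem, $A-\mu_j$ is therefore invertible on that subspace with inverse of norm at most $(2a)^{-1}$. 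Hence
\[
\|(\id-\Pi_\F)\psi_j\|\le \frac{\|r_j\|}{2a}\le \frac{\delta}{a}.
\]

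\emph{Step 2 (general elements of $\E$).} Write $u=\sum_j c_j\psi_j$ with $c\in\C^n$ and $\|u\|=1$. Then $1=\|u\|^2=\langle c,\Psi c\rangle\ge\lambda_\Psi^{\min}|c|^2$, so $|c|\le(\lambda_\Psi^{\min})^{-1/2}$; here linear independence of the $\psi_j$ gives $\lambda_\Psi^{\min}>0$. By the triangle inequality, Step 1, and Cauchy–Schwarz,
\[
\|(\id-\Pi_\F)u\|\le\sum_j|c_j|\,\frac{\delta}{a}\le\sqrt n\,|c|\,\frac{\delta}{a}\le\frac{\sqrt n\,\delta}{a\sqrt{\lambda_\Psi^{\min}}} .
\]
Taking the supremum over all unit $u\in\E$ yields \eqref{distEFab}.

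The only point that needs care is Step 1, namely justifying that the spectrum of $A$ on $\operatorname{Ran}(\id-\Pi_\F)$ stays at distance at least $2a$ from $I$. This follows from the spectral theorem together with the gap assumption. The rest of the argument is linear algebra.
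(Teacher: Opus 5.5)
Your proof is correct and follows essentially the standard Helffer--Sj\"ostrand argument (the paper only cites this result and gives no proof of its own): first bound $\|(\id-\Pi_\F)\psi_j\|$ for each $j$ via the spectral gap, then pass to a unit vector $u=\sum_j c_j\psi_j\in\E$ using $|c|^2\le 1/\lambda_\Psi^{\min}$ and Cauchy--Schwarz. Your Step 1 in fact yields the slightly sharper constant $2a$ in place of $a$.
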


\end{appendix}

\end{document}